\documentclass[conference]{IEEEtran}
\IEEEoverridecommandlockouts
% The preceding line is only needed to identify funding in the first footnote. If that is unneeded, please comment it out.
\usepackage{cite}
\usepackage{amsthm,amsmath,amssymb,amsfonts}
\usepackage{algorithmic}
\usepackage{graphicx}
\usepackage{textcomp}
\usepackage{xcolor}
\usepackage{flushend}
\usepackage{enumerate}
\newtheorem{thm}{Theorem}
\newtheorem{pro}{Proposition}
\newtheorem{lem}{Lemma}

\theoremstyle{definition}
\newtheorem{defn}{Definition}

\newtheorem{algm}{Algorithm}

\theoremstyle{remark}
\newtheorem{remark}{Remark}
\def\BibTeX{{\rm B\kern-.05em{\sc i\kern-.025em b}\kern-.08em
    T\kern-.1667em\lower.7ex\hbox{E}\kern-.125emX}}
\begin{document}

\title{$\omega-$nonblocking supervisory control of discrete-event systems with infinite behavior\\
\thanks{This work was supported in part by the National Natural Science Foundation of China under Grant 62003199, in part by the Fundamental Research Funds for the
Central Universities of China under Grant 3102019ZDHKY11, in part by the Postdoctoral Science Foundation of China under Grant 2019M663819.}
}

\author{
\IEEEauthorblockN{Ting Jiao\IEEEauthorrefmark{1},  Renyuan Zhang\IEEEauthorrefmark{2}, Kai Cai\IEEEauthorrefmark{3}}

\IEEEauthorblockA{\IEEEauthorrefmark{1}School of Automation and Software Engineering, Shanxi University, Taiyuan, China}
\IEEEauthorblockA{\IEEEauthorrefmark{2}School of Automation, Northwestern Polytechnical University, Xi'an, China}
\IEEEauthorblockA{\IEEEauthorrefmark{3}Urban Research Plaza, Osaka City University, Japan}
}

\maketitle

\begin{abstract}
In the supervisory control framework of discrete-event systems (DES) with infinite behavior initiated by Thistle and Wonham, a supervisor satisfying the minimal acceptable specification and the maximal legal specification is synthesized. However, this supervisor may incur livelocks as it cannot ensure that the infinite behavior under supervision will always visit some marker states. To tackle this problem, we propose the definition of markability by requiring that all infinite cycles include at least one marker state. Then we formulate the problem of $\omega-$nonblocking supervisory control of DES with infinite behavior to synthesize an $\omega-$nonblocking (i.e. nonblocking, deadlock-free and livelock-free) supervisor. An algorithm is proposed to achieve $\omega-$nonblockingness by computing the supremal $*-$controllable, $*-$closed, $\omega-$controllable and markable sublanguage. We utilize the example of a robot as a running example.
\end{abstract}

\begin{IEEEkeywords}
Discrete-event systems, infinite behavior, markability, $\omega-$nonblockingness, B\"uchi automata
\end{IEEEkeywords}

\section{Introduction}
In the supervisory control framework of discrete event systems (DES), the role of a supervisor is to confine the behavior of the plant within a specified range prescribed by the specification. This is realized by disabling the occurrence of some controllable events. How to tackle the supervisory control of DES with finite behavior has been studied extensively. The definitions of $*-$controllability and $*-$closedness \cite{Wonham16a} are defined to achieve a marking nonblocking supervisory control, whose solvability is characterized in terms of the supremal $*-$controllable and $*-$closed sublanguage of the set of all allowable strings \cite{Wonham16a,Cassandras08}. For DES with infinite behavior, it is possible to investigate the supervisory control dealing with both safety and liveness specifications. The safety specification describes that some states should be avoided and the liveness specification describes that some states must be visited eventually (infinitely often). The solvability of the supervisory control problem for $\omega-$languages (SCP$^\omega$) is equivalent to the existence of an $\omega-$controllable and $\omega-$closed language. However, $\omega-$controllablility is preserved under arbitrary unions but not intersections, and $\omega-$closure is preserved under arbitrary intersections but not unions. Therefore, $\omega-$controllability and $\omega-$closure have to be considered separately \cite{Thistl91,ThiWon94a,ThiWon94b}. This incurs that the solvability of the SCP$^\omega$ is not a simple extension of the results obtained in DES with finite behavior.

In terms of the system behavior, the safety specification restrains the finite behavior and the liveness specification restrains the infinite behavior. For the finite behavior of the controlled DES, marker states and relevant marker behavior need to be considered, because marker states often embody some meaningful information, such as the completion of a task, distinguishing successful strings from others, etc. The definition of nonblocking requires that all strings surviving under control be prefixes of the marker behavior. In the existing work, the finite behavior of the controlled DES is nonblocking and meets the safety specification \cite{Wonham16a,Cassandras08,Thistl91,ThiWon94a,ThiWon94b,Seow20,Majumdar20,Ciolek20,Schmuck20}; the infinite behavior of the controlled DES is deadlock-free and meets the liveness specification \cite{Thistl91,ThiWon94a,ThiWon94b,Seow20,Majumdar20,Ciolek20,Schmuck20}. However, it is not guaranteed that the infinite behavior under supervision is able to visit given marker states, because by the nonblocking property only finite behavior can visit marker states, and meanwhile deadlock-freeness does not take marker states into consideration.

To meet the nonblocking requirement for DES with infinite behavior, we propose the definition of livelock-freeness for DES with infinite behavior, which describes that all infinite strings must visit at least one of the marker states. With this new definition, we propose the definition of $\omega-$nonblockingness for a supervisor if it is nonblocking, deadlock-free and livelock-free. Namely, an $\omega-$nonblocking supervisor will ensure that the finite behavior of the controlled DES is nonblocking; simultaneously, the infinite behavior of the controlled DES is both deadlock-free and livelock-free. Furthermore, we formulate the $\omega-$nonblocking supervisory control problem as: to construct an $\omega-$nonblocking supervisor such that the safety and liveness specifications are satisfied. To the best of the authors' knowledge, this problem has not been investigated in the literature.

In \cite{Wonham16a} and references therein, both the closed language and marker language are utilized to analyze the controlled behavior of the plant. The existence and synthesis results of state feedback for marker-progressive control of fair DES are presented in \cite{Seow20}, where the supervisory control problem is formulated in linear-time temporal logic. In \cite{Majumdar20}, an obliging game is employed to solve the supervisory control problem over non-terminating processes modeled as $\omega-$regular automata. The marker states in \cite{Majumdar20} are specified by a B\"uchi acceptance condition, thereby ensuring the infinite visit of marker states. A compositional supervisory control approach is proposed in \cite{Ciolek20} by utilizing the reactive synthesis and automated planning. Reference \cite{Schmuck20} demonstrates the relation between the reactive synthesis and supervisory control of non-terminating processes. Both \cite{Ciolek20} and \cite{Schmuck20} endeavor to analyze the supervisory control problem for DES with infinite behavior via automated planning and reactive synthesis methods and pay no special attention to marker states.

In this paper, we propose the definition of markability to describe livelock-freeness from the perspective of languages and an approach to compute the supremal markable sublanguage. Then, we propose an algorithm to compute the supremal language pair, i.e. the supremal $*-$controllable and $*-$closed sublanguage and the supremal markable and $\omega-$controllable sublanguage. We show that the $\omega-$nonblocking supervisory control problem of DES with infinite behavior is solvable if and only if the supremal language pair are both nonempty, and the infimal $\omega-$closed superlanguage of the minimal acceptable sublanguage belongs to the supremal markable and $\omega-$controllable sublanguage.

The main contributions of this paper are fourfold.
\begin{enumerate}
  \item We propose the definition of $\omega-$nonblockingness to ensure that the finite behavior of the controlled DES is nonblocking and the infinite behavior of the controlled DES is both deadlock-free and livelock-free.
  \item Based on the definition of $\omega-$nonblockingness, we formalize the $\omega-$nonblocking supervisory control problem of DES with infinite behavior. Other than satisfying the safety and liveness specifications, the $\omega-$nonblocking supervisory control requires that the infinite behavior under supervision visit some marker states.
  \item To depict livelock-freeness from the perspective of languages, we propose the definition of markability of a given $\omega-$language to ensure that the infinite strings will always visit some marker states. This definition differs from the marker states in the definition of $\mathcal{M}-$directingness \cite{{Seow20}}, where the marker states are similar to states in the B\"uchi acceptance criterion. Moreover, properties and computation approaches related to markability are also presented.
  \item For the $\omega-$nonblocking supervisory control of DES with infinite behavior, an approach to synthesize an $\omega-$nonblocking (i.e. nonblocking, deadlock-free and livelock-free) supervisor is proposed and an algorithm to compute it is presented. Our approach remedies the shortcoming that $\omega-$nonblockingness is not guaranteed in the infinite behavior obtained by the Thistle's approach \cite{Thistl91,ThiWon94a,ThiWon94b}. Moreover, our approach imposes no additional restrictions compared with the Seow's approach \cite{Seow20}.
\end{enumerate}

The rest of this paper is organized as follows. Section \uppercase\expandafter{\romannumeral2} lays a supervisory control background for DES with finite and infinite behavior. Section \uppercase\expandafter{\romannumeral3} formulates the $\omega-$nonblocking supervisory control of DES with infinite behavior. Section \uppercase\expandafter{\romannumeral4} presents the definition of markability and its relevant properties. Section \uppercase\expandafter{\romannumeral5} shows how to compute an $\omega-$nonblocking supervisor. Section \uppercase\expandafter{\romannumeral6} concludes this paper.

\section{Preliminaries}\label{preliminaries}
Let $\Sigma$ be a finite alphabet. Let $\Sigma^*$ and $\Sigma^\omega$ denote the sets of all finite and infinite strings over $\Sigma$ respectively. Let $\Sigma^\infty:=\Sigma^*\dot{\cup}\Sigma^\omega$.

For any $k\in\Sigma^*,v\in\Sigma^\infty$, write $k\leq v$ if there exists some $t\in\Sigma^\infty$ such that $kt=v$, i.e. $k$ is a \emph{prefix} of $v$. Define the map $\text{pre}:2^{\Sigma^\infty}\rightarrow2^{\Sigma^*}$ by
\begin{align*}
\text{pre}:V\mapsto\{k\in\Sigma^*|(\exists v\in V)k\leq v\}.
\end{align*}
The \emph{limit} of a $*-$language is given by
\begin{align*}
\text{lim}(K):=\text{pre}^{-1}(K)\cap\Sigma^\omega,
\end{align*}
where $\text{pre}^{-1}:2^{\Sigma^*}\rightarrow2^{\Sigma^\infty}$ is the inverse image of $\text{pre}:2^{\Sigma^\infty}\rightarrow2^{\Sigma^*}$.

Define operator $\text{clo}:2^{\Sigma^\infty}\rightarrow2^{\Sigma^\infty}$ as
\begin{align*}
\text{clo}:R\mapsto\text{lim}(\text{pre}(R))=\text{pre}^{-1}(\text{pre}(R))\cap\Sigma^\omega.
\end{align*}
$\text{clo}(R)$ is called the $\omega-$\emph{closure} of $R$. $R$ is $\omega-$\emph{closed} if $R=\text{clo}(R)$. $R$ is $\omega-$\emph{closed with respect to} $S$ if $R=\text{clo}(R)\cap S$, where $S\subseteq\Sigma^\omega$.

The plant to be controlled is modeled by a six-tuple \[{\bf G} = (Q,\Sigma,\eta,q_0,Q_m,\mathcal{B}_G),\]
where $Q$ is the \emph{finite state set}, $q_0$ is the \emph{initial state}, $\Sigma$ is the \emph{finite event set}, $\eta:Q\times\Sigma\rightarrow Q$ is the (partial) \emph{state transition function}, $Q_m$ is the \emph{set of marker states}, and $\mathcal{B}_G$ is the \emph{B\"uchi acceptance criterion}. For plant $\bf G$, the finite behavior is modeled by the five-tuple $(Q,\Sigma,\eta,q_0,Q_m)$; its finite closed behavior and finite marker behavior are denoted as $L({\bf G}):=\{s\in\Sigma^*|\eta(q_0,s)!\}$ and $L_m({\bf G}):=\{s\in L({\bf G})|\eta(q_0,s)\in Q_m\}$ respectively, where $\eta(q_0,s)!$ means that $\eta(q_0,s)$ is defined. The infinite behavior of $\bf G$ is modeled by the five-tuple $(Q,\Sigma,\eta,q_0,\mathcal{B}_G)$ and is denoted as $S({\bf G}):=\{s\in\Sigma^\infty|R(s)\cap\mathcal{B}_G\neq\emptyset\}$, where $R(s)$ is the set of states that string $s$ visits infinitely often. In this paper, we only consider the case that $\mathcal{B}_G=Q$. Namely, the infinite behavior can be interpreted as an absence of liveness assumptions in the modeling of the uncontrolled DES. Thus, we have $S({\bf G})=\text{lim}(L({\bf G}))$. We say that ${\bf G}$ is {\it nonblocking} if $L({\bf G}) = \overline{L_m({\bf G})}$, and is {\it deadlock-free} if $\text{pre}(S({\bf G})) = L({\bf G})$.

In this paper, we only consider the infinite behavior depicted by deterministic B\"uchi automata (DBA). For a given DBA ${\bf G}$ over an alphabet $\Sigma$ and for any infinite string $s\in S({\bf G})$, infinite string $s\in\Sigma^\omega$ can be written as $s:=tu^\omega$ with $t,u\in \Sigma^*,u\neq\varepsilon$ as DBA are a special case of the nondeterministic B\"uchi automata (NBA), which agree with the class of $\omega-$regular languages \cite{Bai08}.

There are generally two classes of control requirements imposed on $\bf G$: {\it safety}
specifications describing that some conditions on $\bf G$ {\it must not} occur,
and {\it liveness} specifications describing that some other conditions
must occur {\it eventually} \cite{Lampor77}.

For safety specification $E_s\subseteq\Sigma^*$, a nonblocking supervisor
\begin{align*} \label{eq:f*}
f^*:L({\bf G}) \rightarrow \Gamma
\end{align*}
may be constructed (\cite{Wonham16a,GolRam87}) such that the finite closed-loop
marker behavior $L_m({\bf G}^{f^*})$ satisfies
\begin{align*}
 &L_m({\bf G}^{f^*}) = \sup\mathcal{C}^*(E_s) \subseteq E_s,
\end{align*}
where ${\bf G}^{f^*}$ represents the action of supervisor $f^*$, and $\sup\mathcal{C}^*(E_s)$ denotes the supremal $*-$controllable and $*-$closed sublanguage of $E_s$.

For \emph{maximal legal specification} $E_l\subseteq\Sigma^\omega$ and \emph{minimal acceptable specification} $A_l\subseteq\Sigma^\omega$, it is proved in \cite[Theorem 5.3]{ThiWon94b} that there exists an $\omega-$controllable and $\omega-$closed language $T$ such that $A_l \subseteq T \subseteq E_l$ if and only if
\begin{align*} %\label{eq:control_exist}
\inf\mathcal{F}^\omega(A_l) \subseteq \sup\mathcal{C}^\omega(E_l),
\end{align*}
where $\inf\mathcal{F}^\omega(A_l)$ represents the infimal $\omega-$closed superlanguage of $A_l$ and $\sup\mathcal{C}^\omega(E_l)$ represents the supremal $\omega-$controllable sublanguage of $E_l$.

If such $T$ exists, a  deadlock-free supervisor
\begin{align*}%\label{eq:f_w}
f^\omega : L({\bf G}) \rightarrow \Gamma
\end{align*}
may be constructed such that the infinite closed-loop behavior $S({\bf G}^{f^\omega})$ satisfies $S({\bf G}^{f^\omega}) = T$ and therefore
\begin{align*}
A_l \subseteq S({\bf G}^{f^\omega}) \subseteq E_l,
\end{align*}
where ${\bf G}^{f^\omega}$ represents the action of supervisor $f^\omega$. The detailed construction rules of the supervisor $f^\omega$ is referred to \cite{ThiWon94b}.

Because the overall supervisor should satisfy safety specifications and liveness specifications simultaneously, we define the {\it supervisory control} for $\bf G$ as any map $f: L({\bf G}) \rightarrow \Gamma$ with $f=f^*=f^\omega$, where $\Gamma := \{\gamma \subseteq \Sigma| \gamma \supseteq \Sigma_{u}\}$.
Then the finite and infinite closed-loop behaviors of the {\it controlled DES}
${\bf G}^f$, representing the action of the supervisor $f$ on $\bf G$,
are respectively given by

\begin{enumerate}[(a)]
\item $L({\bf G}^f)$, the finite closed behavior synthesized by $f$, defined
by the following recursion:
\begin{align*}%\label{eq:fincontrolbehaiv}
\mbox{(i)} ~&\varepsilon \in L({\bf G}^f), \notag\\
\mbox{(ii)} ~&(\forall s \in \Sigma^*, \forall \sigma\in \Sigma) ~s\sigma \in L({\bf G}^f)
\Leftrightarrow\\
\mbox{ } ~&s \in L({\bf G}^f) ~\&~ s\sigma \in L({\bf G}) ~\&~ \sigma \in f(s), \notag\\
\mbox{(iii)} ~&\mbox{no other strings belong to $L({\bf G}^f)$;}
\end{align*}

\item $L_m({\bf G}^f)$, the finite marker behavior synthesized by $f$, given by
\begin{align*}%\label{eq:infcontrolbehaiv}
L_m({\bf G}^f):= L({\bf G}^f)\cap L_m({\bf G});
\end{align*}

\item $S({\bf G}^f)$, the infinite behavior synthesized by $f$, given by
\begin{align*}%\label{eq:infcontrolbehaiv}
S({\bf G}^f):= \text{lim}(L({\bf G}^f)) \cap S({\bf G}).
\end{align*}
\end{enumerate}
In other words, $L({\bf G}^f)$ is the finite closed behavior under the control of supervisor $f$. Namely, the empty string $\varepsilon$ is in $L({\bf G}^f)$; event $\sigma\in\Sigma$ is enabled after the occurrence of string $s\in L({\bf G}^f)$ if and only if event $\sigma$ is defined after the occurrence of string $s\in L({\bf G})$ and event $\sigma$ is enabled by supervisor $f$. The finite marker behavior $L_m({\bf G}^f)$ consists exactly of the strings of $L_m({\bf G})$ that `survive' under supervision by $f$. The controlled infinite behavior $S({\bf G}^f)$ consists of the $\omega-$strings generated by finite closed behavior $L({\bf G}^f)$ and within the infinite behavior $S({\bf G}^f)$.
%The definition of $L({\bf G}^f)$ means that string $s\sigma$ can occur under supervision if and only if string $s$ can occur under supervision, and event $\sigma$ can take place without violating either the `physical' constraints embodied by $L({\bf G})$ or the control pattern imposed by the supervisor. $L_m({\bf G}^f)$ consists of the strings of $L_m({\bf G})$ that survive under the supervision of $f$.
%The definition of $S({\bf G}^f)$ implies that an infinite string $s \in S({\bf G})$ can {\it eventually} occur if and only if it can occur in the absence of supervision and the supervisor does not prevent the occurrence of any of its prefixes in $\text{pre}(s)$.
%Namely, $f$ exerts its influence on infinite strings only through the control actions on their finite prefixes.
\section{Problem formulation}\label{problem_formulation}
Write \[{\bf G}^f = (Q^f,\Sigma,\eta^f,q_0^f,Q_m^f).\]
We say that $f: L(\bf G) \rightarrow \Gamma$ is a {\it nonblocking} supervisor if ${\bf G}^f$ is nonblocking (i.e. $\overline{L_m({\bf G}^f)} = L({\bf G}^f)$, and a {\it deadlock-free} supervisor if ${\bf G}^f$ is deadlock-free (i.e. $\text{pre}(S({\bf G}^f)) = L({\bf G}^f)$). However, the requirement that a supervisor $f$ be both nonblocking and deadlock-free is not enough for certain applications.
\subsection{Motivating example}
To motivate our work, we consider a robot throughout this paper, whose behavior is depicted by the automaton shown in Fig.~\ref{fig:robot}. The robot traverses five zones represented by states $0,1,2,3$ and $4$ respectively, and it has to return to zone $0$ regularly for recharging (denoted by marker state $0$). In addition, the robot may visit zone $3$ for temporary recharging (denoted by marker state $3$). Events $c_i,u_j,i\in\{1,\cdots,5\},j\in\{1,2,3\}$ mean that the robot leaves one room and enters into the other room. Events $c_i,i\in\{1,\cdots,5\}$ are controllable and events $u_j,j\in\{1,2,3\}$ are uncontrollable.

\begin{figure}[ht]
\centering
    \includegraphics[scale = 0.5]{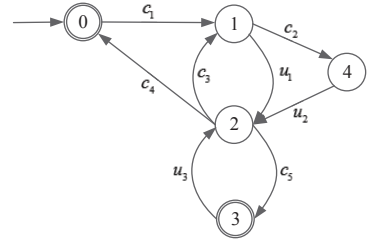}
\caption{State transition graph of a robot}
\label{fig:robot}
\end{figure}

The safety specification $E_s$ prohibits the robot from entering zone $4$, where a dangerous area is located. The safety specification $E_s$ is represented by the automaton shown in Fig.~\ref{fig:robot_es}. The maximal legal specification $E_l$ requires that the robot must always inspect zone $1$ to supervise all the activities in this critical zone. Accordingly, the maximal legal specification $E_l$ is represented by the DBA shown in Fig.~\ref{fig:robot_el}, whose B\"uchi acceptance criterion is $\{1\}$. By computing the finite closed-loop marker behavior (by the standard supervisory control theory of DES with finite behavior in \cite{Wonham16a}) and infinite closed-loop behavior (by the Thistle and Wonham's supervisory control theory of DES with infinite behavior in \cite{ThiWon94b}), we obtain the controlled DES ${\bf G}^f$ shown in Fig.~\ref{fig:robot_gf}.

\begin{figure}[ht]
\centering
    \includegraphics[scale = 0.5]{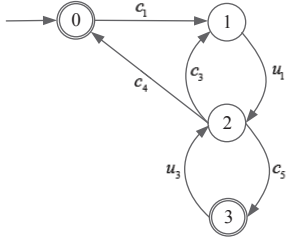}
\caption{State transition graph of the automaton representing safety specification $E_s$}
\label{fig:robot_es}
\end{figure}

\begin{figure}[ht]
\centering
    \includegraphics[scale = 0.5]{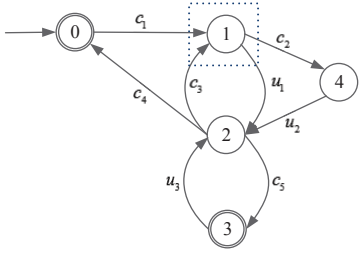}
\caption{State transition graph of the DBA representing maximal legal specification $E_l$}
\label{fig:robot_el}
\end{figure}

\begin{figure}[ht]
\centering
    \includegraphics[scale = 0.5]{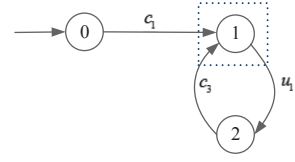}
\caption{State transition graph of the controlled DES ${\bf G}^f$}
\label{fig:robot_gf}
\end{figure}

%If we follow the Thistle's supervisor synthesis approach in \cite{ThiWon94b}, it is possible that the robot has no chance to recharge by following infinite behavior such as $c_1(u_1c_3)^\omega$, where marker state $0$ will not be visited infinitely often.
It is inspected from Fig.~\ref{fig:robot_gf} that ${\bf G}^f$ is deadlock-free but blocking. Moreover, the robot has no chance to recharge by following the infinite behavior $c_1(u_1c_3)^\omega$. This implies that we cannot only resort to the Thistle's approach to simultaneously guarantee that the finite behavior of the controlled DES is nonblocking and any infinite string of the controlled DES with infinite behavior will visit some marker states. The undesirable infinite behavior is inevitable as marker states are ignored in the Thistle's approach.

Intuitively, we may add marker states into the B\"uchi acceptance criterion to have the marker states be visited infinitely often. However, by doing so, we will unnecessarily enlarge the maximal legal specification. Consequently, there may exist infinite strings of the controlled DES such that some marker states, rather than the states in the original B\"uchi acceptance criterion, are visited infinitely often. In the example of the robot, by incorporating the marker states into the B\"uchi acceptance criterion, the resultant supervisor synthesized by the Thistle's approach is shown in Fig. \ref{fig:robot_sup'}. From this figure we inspect that there exist infinite strings allowed by the supervisor, but failing to ensure that the states in the original B\"uchi acceptance criterion being visited infinitely often, say infinite string $c_1u_1c_5(u_3c_5)^\omega$. Alternatively, if we compute the intersection of the set of marker states and the B\"uchi acceptance criterion, then we may obtain an empty set, just as in the example of the robot, and no effective supervisor will be synthesized with the Thistle's approach.

\begin{figure}[ht]
\centering
    \includegraphics[scale = 0.5]{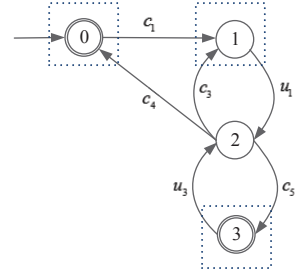}
\caption{State transition graph of the controlled DES when the marker states are added into the B\"uchi acceptance criterion in the example of the robot}
\label{fig:robot_sup'}
\end{figure}

Meanwhile, to perform the event-selection mechanism, the Seow's approach in \cite{Seow20} needs to choose the set of fair events $\Sigma_{\mathcal{F}}\subseteq\Sigma_u$. However, in the example of the robot, as events $c_4$ and $c_5$ are controllable, they are not potential fair events. Thus, to apply the Seow's approach, we have to modify controllable events $c_4$ and $c_5$ to uncontrollable events (say $u_4$ and $u_5$ respectively). By the detailed analysis explained in Remark \ref{rem_seow}, we have that the infinite behavior of the resultant supervisor is more restrictive than our approach. Therefore, how to guarantee that any infinite string in the controlled DES with infinite behavior will always visit some marker states motivates the work of this paper.

\subsection{Problem formulation}
By the above analysis, we know that in some applications, in addition to nonblocking and deadlock-free, it is required that the infinite strings of controlled plant ${\bf G}^f$ should always visit some marker states. This is the property of \emph{livelock-free}; namely, all the cycles representing infinite behavior of ${\bf G}^f$ must include at least one marker state. The general definition of \emph{livelock} is that a specific process is not progressing \cite{Sam80}. In DES, a livelock occurs if there exists a \emph{strongly connected component} such that there is no marker state in it and there is no transition defined exiting from it \cite{Yin16}. In our scenario, a livelock occurs if an infinite string cannot visit some marker states. %To define this formally, we need the concept of
%cycles and elementary cycles for $\bf G$.

%First, an {\it execution} is a sequence $<q_0,\sigma_0,...,q_k,\sigma_k,...>$ where
%$q_i \in Q$, $\sigma_i \in \Sigma$ and $x_{i+1} = \delta(x_k,\sigma_k)$ for all
%$i \in \{0...\infty\}$. We say that an execution forms a {\it infinitely occurring cycle}
%if there exists .

\begin{defn}
Given controlled plant ${\bf G}^f$ defined as above, we say that ${\bf G}^f$ is
{\it livelock-free} if:
\[ (\forall s \in S({\bf G}^f))~ R(s) \cap Q_m^f \neq \emptyset,\]
where $R(s)$ represents the set of states visited by $s$ infinitely often
and $Q_m^f$ is the set of marker states of ${\bf G}^f$.
\end{defn}

Note here that this definition is different with its finite counterpart.
For finite behavior, ${\bf G}^f$ is livelock-free if there does not exist any cycle $C \subseteq Q^f$
such that (i) $C \cap Q_m^f = \emptyset$ and (ii) there does not
exist any transition exiting from $C$ \cite{Yin16}. Considering infinite behavior, the second condition is not required
because the system may visit the states in a cycle infinitely often and will never exit from the cycle.

\begin{defn}
Given controlled plant ${\bf G}^f$ defined as above, we say that ${\bf G}^f$ is
{\it $\omega-$nonblocking} if:
\begin{enumerate}[(a)]
  \item ${\bf G}^f$ is nonblocking, i.e. $L({\bf G}^f)=\overline{L_m({\bf G}^f)}$;
  \item ${\bf G}^f$ is deadlock-free, i.e. $\text{pre}(S({\bf G}^f))=L({\bf G}^f)$;
  \item ${\bf G}^f$ is livelock-free, i.e. $(\forall s\in S({\bf G}^f))R(s)\cap Q_m^f\neq\emptyset$.
\end{enumerate}
\end{defn}
From the definition of $\omega-$nonblocking, we know that for the finite behavior, ${\bf G}^f$ is nonblocking; for the infinite behavior, it is both deadlock-free and livelock-free.

To this end, we are ready to formulate the $\omega-$nonblocking supervisory control problem of DES with infinite behavior (NSCP$^\omega$) as follows:

{\it
Given a DES ${\bf G}$ defined above, $*-$language $E_s\subseteq \Sigma^*$ (representing safety specification),
and $\omega-$languages $A_l, E_l \subseteq \Sigma^\omega$ (representing minimal acceptable and maximal legal liveness specifications respectively) such that $E_s\subseteq L_m({\bf G})$ and $A_l\subseteq E_l\subseteq S({\bf G})$,
construct an $\omega-$nonblocking supervisor $f:L({\bf G})\rightarrow \Gamma$ for ${\bf G}$ such that
\begin{align}
& L_m({\bf G}^f) \subseteq E_s; \label{eq:NSCP_cond1}\\
& A_l \subseteq S({\bf G}^f) \subseteq E_l. \label{eq:NSCP_cond2}
\end{align}
}
%\begin{thm}[\cite{Bai08}, Theorem 4.32]
%The class of languages accepted by nondeterministic B\"uchi automata (NBA) agrees with the class of $\omega-$regular languages.
%\end{thm}

%The $\omega-$regular language is in the form of
%\begin{align*}
%\mathcal{L}=\bigcup_{i=1}^{k}U_iV_i^\omega \text{ with } k\in\omega,U_i,V_i\neq\emptyset,U_i,V_i\in \text{REG},\text{ and } \varepsilon
%\notin V_i\text{ for } i=1,\cdots,k
%\end{align*}

%In this paper, we only consider those infinite behaviors depicted by deterministic B\"uchi automata (DBA).
\section{Definition and properties of markability}\label{markability}
To satisfy the livelock-free property for ${\bf G}^f$, we introduce the following definition of markability of a given $\omega-$language.

\begin{defn}\label{defn:markable}
For any DBA ${\bf G}$ and any $T\subseteq S({\bf G})$, $T$ is {\it markable} with respect to $L_m({\bf G})$ if
for any string $s \in T$, written as $s = tu^\omega$ ($t,u\in \Sigma^*,u\neq\varepsilon$),
there exists $s' \in \text{pre}(s)$ such that $s' \geq t$ and $s' \in L_m({\bf G})$.
\end{defn}

Definition \ref{defn:markable} can be interpreted as follows. For any infinite string $s=tu^\omega(t,u\in\Sigma^*,u\neq\varepsilon)$, at least one marker state will be reached by string $s' \in \text{pre}(s)$ with $s' \geq t$ and $s' \in L_m({\bf G})$. Afterwards, this marker state will be visited infinitely often as it is in the cycle of an infinite string. Such a cycle must exist as we only consider $\omega-$languages recognized by finite DBA. For the DBA shown in Fig.~\ref{fig:robot_el}, let $s=c_1(u_1c_4c_1)^\omega$ with $t:=c_1,u:=u_1c_4c_1$. There exists string $s^\prime=c_1u_1c_4\in\text{pre}(s)$ such that $s^\prime\geq t$ and $s^\prime\in L_m({\bf G})$. However, infinite string $s=c_1(u_1c_3)^\omega$ with $t:=c_1,u:=u_1c_3$ fails to satisfy the condition that there exists $s^\prime\in\text{pre}(s)$ such that $s^\prime\geq t$ and $s^\prime\in L_m({\bf G})$. This is due to the fact that no marker state will be visited infinitely often by infinite string $s=c_1(u_1c_3)^\omega$.

We note that by the definition of $\mathcal{M}-$directingness \cite{Seow20}, every legal state trajectory of DES model ${\bf G}$ satisfying $\Box P$ (resembling the requirement of the safety specification) also satisfies that every marker condition in system marker set $\mathcal{M}$ can be met infinitely often (resembling the requirement of the liveness specification). Thus, in the definition of $\mathcal{M}-$directingness, the marker states are in fact similar to states in the B\"uchi acceptance criterion. While in our definition of markability, we require that all infinite cycles include at least one marker states. Thus, markability imposes additional constraints on infinite behaviors. In our language setting, if we let the set of marker states coincide with the B\"uchi acceptance criterion, then all marker states will be visited infinitely often, thereby exhibiting the characteristic of $\mathcal{M}-$directingness.

The markable languages have the following properties.

\begin{lem} \label{lem:mark_sub}
If an $\omega-$language $S \subseteq \Sigma^\omega$ is markable with respect to $L_m({\bf G})$, then any sublanguage
$T \subseteq S$ is markable with respect to $L_m({\bf G})$.
\end{lem}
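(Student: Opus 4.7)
The plan is to unpack the definition of markability and observe that it is a pointwise property: the markability condition is stated string by string, with no global structural requirement on the ambient language $S$ other than that each of its strings admit the required witness in $L_m({\bf G})$. Monotonicity therefore falls out immediately from $T \subseteq S$.

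Concretely, I would first verify the implicit hypothesis that $T$ is an admissible candidate for markability, i.e.\ $T \subseteq S({\bf G})$. This follows from the chain $T \subseteq S \subseteq S({\bf G})$, where the second inclusion comes from the standing assumption in Definition \ref{defn:markable} that $S$ is markable (which presupposes $S \subseteq S({\bf G})$). Next, I would take an arbitrary $s \in T$ and an arbitrary decomposition $s = tu^\omega$ with $t,u \in \Sigma^*$ and $u \neq \varepsilon$. Because $T \subseteq S$, we have $s \in S$, so the markability of $S$ applied to this same $s$ and the same decomposition $s = tu^\omega$ yields a witness $s' \in \text{pre}(s)$ with $s' \geq t$ and $s' \in L_m({\bf G})$. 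That same $s'$ trivially serves as the required witness for $T$, since the conditions $s' \in \text{pre}(s)$, $s' \geq t$, and $s' \in L_m({\bf G})$ make no reference to the surrounding language.

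I do not expect any real obstacle here: the entire content is that the defining property of markability quantifies over individual strings, so it is inherited by subsets. The only mild subtlety worth flagging in the writeup is to make explicit that the choice of decomposition $s = tu^\omega$ is arbitrary (under whichever reading of Definition \ref{defn:markable} one prefers), and that the witness $s'$ transfers verbatim from $S$ to $T$ without any modification.
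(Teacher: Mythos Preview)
Your proposal is correct and follows essentially the same approach as the paper: both arguments observe that any $s \in T$ is also in $S$, and then invoke the markability of $S$ to obtain the required witness $s'$. The paper's version is simply terser, omitting the explicit mention of the arbitrary decomposition and the check that $T \subseteq S({\bf G})$, but the underlying reasoning is identical.
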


\begin{proof}
Since $T\subseteq S$, any infinite string $s \in T$ also
belongs to $S$, i.e. $s \in S$. Then the result is immediate because $S$ is markable.
\end{proof}

\begin{lem} \label{lem:mark_nonblock}
If an $\omega-$language $S \subseteq \Sigma^\omega$ is markable with respect to $L_m({\bf G})$, then
\[\overline{\text{pre}(S) \cap L_m({\bf G})} = \text{pre}(S).\]
\end{lem}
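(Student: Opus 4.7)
The plan is to verify set equality by double inclusion. The inclusion $\overline{\text{pre}(S)\cap L_m(\mathbf{G})}\subseteq\text{pre}(S)$ is essentially bookkeeping: $\text{pre}(S)\cap L_m(\mathbf{G})\subseteq\text{pre}(S)$, and $\text{pre}(S)$ is prefix-closed by the very definition of $\text{pre}$, so any prefix of an element of $\text{pre}(S)\cap L_m(\mathbf{G})$ already lies in $\text{pre}(S)$. I would dispatch this in a single sentence.

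For the substantive direction $\text{pre}(S)\subseteq\overline{\text{pre}(S)\cap L_m(\mathbf{G})}$, I would fix an arbitrary $t\in\text{pre}(S)$ and exhibit an element of $\text{pre}(S)\cap L_m(\mathbf{G})$ that has $t$ as a prefix. By definition of $\text{pre}(S)$, there is some $s\in S$ with $t\leq s$, and since $S\subseteq S(\mathbf{G})$ with $\mathbf{G}$ a DBA, the remark in Section~\ref{preliminaries} lets me decompose $s=t_0 u^\omega$ with $t_0,u\in\Sigma^*$, $u\neq\varepsilon$.

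The step I expect to be the main obstacle is that a single application of markability only supplies a marker string $s'\geq t_0$, which need not be long enough to dominate the arbitrary prefix $t$. My remedy is to exploit the non-uniqueness of the $\omega$-decomposition: for every $k\in\mathbb{N}$ one also has $s=(t_0 u^k)\,u^\omega$, and markability applied to this alternative decomposition yields some $s'_k\in\text{pre}(s)$ with $s'_k\geq t_0 u^k$ and $s'_k\in L_m(\mathbf{G})$. Since $u\neq\varepsilon$, the lengths $|t_0 u^k|$ tend to infinity, and because any two finite prefixes of the same $\omega$-word are $\leq$-comparable, for $k$ sufficiently large one has $t\leq t_0 u^k\leq s'_k$. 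Then $s'_k\in\text{pre}(s)\subseteq\text{pre}(S)$ and $s'_k\in L_m(\mathbf{G})$, so $s'_k$ is the desired witness placing $t$ in $\overline{\text{pre}(S)\cap L_m(\mathbf{G})}$.

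As housekeeping, I would remark on the edge case $S=\emptyset$, in which $\text{pre}(S)=\emptyset$ and both sides are trivially equal. The whole argument therefore rests on reading Definition~\ref{defn:markable} as universally quantified over all decompositions $s=tu^\omega$, which legitimises the phase-shift $t_0\mapsto t_0 u^k$; without that reading the conclusion would not follow from markability alone, so I would be careful to flag this interpretation explicitly.
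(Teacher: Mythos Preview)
Your proof is correct and follows essentially the same idea as the paper's: both apply markability to a decomposition of an $\omega$-word in $S$ whose prefix part already dominates the given finite string. The only difference is cosmetic---the paper extends the given prefix $s\in\text{pre}(S)$ directly to an ultimately periodic word $stu^\omega\in S$ and invokes markability once with the pair $(st,u)$, whereas you first take an arbitrary decomposition $(t_0,u)$ and then phase-shift to $(t_0u^k,u)$; both routes rely on exactly the interpretation of Definition~\ref{defn:markable} that you flag.
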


\begin{proof}
$(\subseteq)$ This direction is obvious because $\overline{\text{pre}(S)} = \text{pre}(S)$.\\
$(\supseteq)$ Let $s \in \text{pre}(S)$. Because $S$ is $\omega-$regular,
there must exist two strings $t,u\in \Sigma^*$ such that $stu^\omega \in S$. Since
$S$ is markable with respect to $L_m({\bf G})$, there must exist string $s' \in \text{pre}(stu^\omega)$ such that
$s' \geq st$ and $s' \in L_m({\bf G})$. By $s' \geq st$, we have $s \in \overline{s'}$. With $s' \in \text{pre}(stu^\omega) \subseteq \text{pre}(S)$ and $s' \in L_m({\bf G})$, we have $s\in \overline{\text{pre}(S) \cap L_m({\bf G})}$.
\end{proof}

Let
\begin{align*}
\mathcal{M}(E_l) = \{&T \subseteq \Sigma^\omega| T\subseteq  E_l \cap S({\bf G}) \\&\text{is markable with respect to $L_m({\bf G})$}\}
\end{align*}
be the set of markable sublanguages of the maximal legal specification $E_l$.

\begin{lem}\label{lem:sup_markability}
$\mathcal{M}(E_l)$ is nonempty and is closed under arbitrary unions. In particular, $\mathcal{M}(E_l)$ contains a (unique) supremal element, which we denote by $\text{sup}\mathcal{M}(E_l)$.
\end{lem}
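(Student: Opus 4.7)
The plan is to verify the three claims in turn, exploiting the fact that markability as introduced in Definition~\ref{defn:markable} is a \emph{pointwise} condition on the infinite strings of a language relative to the fixed marker language $L_m({\bf G})$. Once this is recognized, none of the three claims requires real work.

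For nonemptiness, I would simply observe that the empty language $\emptyset$ satisfies $\emptyset\subseteq E_l\cap S({\bf G})$ and is vacuously markable with respect to $L_m({\bf G})$, since there is no $s\in\emptyset$ to test. Hence $\emptyset\in\mathcal{M}(E_l)$.

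For closure under arbitrary unions, I would take any family $\{T_i\}_{i\in I}\subseteq\mathcal{M}(E_l)$ and set $T:=\bigcup_{i\in I}T_i$. The inclusion $T\subseteq E_l\cap S({\bf G})$ follows termwise from $T_i\subseteq E_l\cap S({\bf G})$. To check markability, pick any $s\in T$; then $s\in T_j$ for some $j\in I$, and since $T_j$ is markable, for every decomposition $s=tu^\omega$ with $t,u\in\Sigma^*$ and $u\neq\varepsilon$ there exists $s'\in\text{pre}(s)$ with $s'\geq t$ and $s'\in L_m({\bf G})$. The very same witness $s'$ certifies markability of $T$ at $s$, so $T\in\mathcal{M}(E_l)$. (This also gives a quick independent derivation of Lemma~\ref{lem:mark_sub}.)

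The existence of a supremal element then follows in one line: set
\[\text{sup}\mathcal{M}(E_l):=\bigcup_{T\in\mathcal{M}(E_l)}T,\]
which lies in $\mathcal{M}(E_l)$ by the closure just established and by construction contains every member of $\mathcal{M}(E_l)$; uniqueness is automatic since it is the maximum under set inclusion. The only subtlety in the whole argument, and the one I would flag explicitly, is the correct parsing of Definition~\ref{defn:markable}: because the condition quantifies universally over strings $s$ and their decompositions into $tu^\omega$ but does not couple witnesses across different strings, it transfers transparently to both sublanguages and unions. Consequently there is no genuine obstacle in the proof; everything is formal.
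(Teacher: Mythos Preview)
Your proposal is correct and follows essentially the same approach as the paper's own proof: nonemptiness via the empty language, closure under unions by picking a witness index for each string, and the supremal element as the union of all members. The only cosmetic difference is that you explicitly quantify over all decompositions $s=tu^\omega$ and flag the pointwise nature of markability, which the paper leaves implicit.
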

\begin{proof}
Since the empty language is markable with respect to $L_m({\bf G})$, it is a member of $\mathcal{M}(E_l)$. Let $T_\alpha\in \mathcal{M}(E_l)$ for all $\alpha$ in some index set $A$, and let $T=\cup\{T_\alpha|\alpha\in A\}$. Then $T\subseteq E_l$. Furthermore, for any $s\in T$ written as $s=tu^\omega(t,u\in\Sigma^*)$, there exists some $\alpha\in A$ such that $s\in T_\alpha$. Thus, there exists $s^\prime\in\text{pre}(s)$ such that $s^\prime\geq t$ and $s^\prime\in L_m({\bf G})$ as $T_\alpha\in \mathcal{M}(E_l)$ is a markable sublanguage of $E_l$.\\
Finally we have for the supremal element
\begin{align*}
\text{sup}\mathcal{M}(E_l)=\cup\{T|T\in \mathcal{M}(E_l)\}.
\end{align*}
\end{proof}

Then $\text{sup}\mathcal{M}(E_l)$ (the supremal markable
sublanguage of $E_l$) can be computed as follows.

\begin{pro} \label{pro:mark}
For given DES ${\bf G} = (Q,\Sigma,\eta,q_0,Q_m,\mathcal{B}_{\bf G})$ and maximal legal specification $E_l$,
construct DBA ${\bf G}' = (Q,\Sigma,\delta,q_0,\mathcal{B}_{{\bf G}'})$ with $\mathcal{B}_{{\bf G}'}=Q_m$; then
\begin{align*}
\sup\mathcal{M}(E_l) = S({\bf G}')\cap E_l.
\end{align*}
\end{pro}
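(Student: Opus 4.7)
My plan is to prove the set equality by two inclusions, exploiting that $\mathbf{G}'$ shares the transition function and initial state of $\mathbf{G}$, so that $L(\mathbf{G}')=L(\mathbf{G})$ and the state trajectory of any $s$ is the same in both automata. The only difference is which states play the ``visited infinitely often'' role: all of $Q$ for $\mathbf{G}$ and only $Q_m$ for $\mathbf{G}'$. Consequently, $S(\mathbf{G}')=\{s\in S(\mathbf{G})\mid R(s)\cap Q_m\neq\emptyset\}\subseteq S(\mathbf{G})$, an observation I will use in both directions.

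For the inclusion $\sup\mathcal{M}(E_l)\subseteq S(\mathbf{G}')\cap E_l$, I take an arbitrary $s\in\sup\mathcal{M}(E_l)$. Then $s\in E_l$ is immediate, and it remains to show $s\in S(\mathbf{G}')$. Using the $\omega$-regularity remark from the Preliminaries together with the finiteness of $Q$, I represent $s$ as $s=tu^{\omega}$ where $u$ spells out a genuine cycle of $\mathbf{G}$, i.e.\ $\eta(q_0,t)=\eta(q_0,tu)$. Applying markability to this particular decomposition produces a prefix $s'\in\text{pre}(s)$ with $s'\geq t$ and $\eta(q_0,s')\in Q_m$. Because the marker state reached at $s'$ sits on the cycle traced by $u$, it is re-entered after every repetition of $u$, hence visited infinitely often, giving $R(s)\cap Q_m\neq\emptyset$ and $s\in S(\mathbf{G}')$.

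For the reverse inclusion $S(\mathbf{G}')\cap E_l\subseteq\sup\mathcal{M}(E_l)$, I show that $S(\mathbf{G}')\cap E_l\in\mathcal{M}(E_l)$, which by Lemma \ref{lem:sup_markability} forces containment in the supremum. The set inclusion $S(\mathbf{G}')\cap E_l\subseteq E_l\cap S(\mathbf{G})$ is automatic from the observation above. For markability, fix any $s\in S(\mathbf{G}')\cap E_l$ and any decomposition $s=tu^{\omega}$; since $s$ is accepted by the B\"uchi condition $Q_m$, some marker state is visited at infinitely many positions along $s$, hence at some position $p\geq|t|$ in particular. Taking $s'$ to be the length-$p$ prefix of $s$ gives $s'\geq t$ and $s'\in L_m(\mathbf{G})$, fulfilling Definition \ref{defn:markable}.

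The main obstacle is selecting the right decomposition in the forward direction: a single marker visit past a generic $t$ need not recur, so if one applies markability to an arbitrary $tu^{\omega}$ one obtains only a one-off marker visit rather than the infinite recurrence needed for $S(\mathbf{G}')$. The resolution is to commit to the cyclic decomposition provided by pigeonholing on the finite state set, after which the periodicity of the state trajectory converts the single prefix $s'\in L_m(\mathbf{G})$ into infinitely many marker visits. The reverse direction is essentially bookkeeping, but it relies crucially on the quantifier in Definition \ref{defn:markable} being read as ``for every decomposition'', which is consistent with how markability is invoked in Lemma \ref{lem:mark_nonblock}.
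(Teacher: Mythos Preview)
Your proof is correct and follows essentially the same two-inclusion strategy as the paper: one direction shows $S(\mathbf{G}')\cap E_l$ is markable (hence lies in $\mathcal{M}(E_l)$ and below the supremum), and the other shows every markable sublanguage of $E_l$ is contained in $S(\mathbf{G}')\cap E_l$. You are in fact more careful than the paper about the point you flag as the ``main obstacle'': the paper's argument for $E''\subseteq E'$ also writes $s=tu^\omega$ and then asserts without further justification that ``state $q_t$ and its downstream states $q'$ visited by strings $t'\in\text{pre}(u^\omega)$ will be visited by $s$ infinitely many times,'' which is exactly the cycle assumption you make explicit via pigeonholing before invoking markability.
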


\begin{proof}
We first show that $E' := S({\bf G}')\cap E_l \in \mathcal{M}(E_l)$.

Let $s \in E'$. Since $E'$ is $\omega-$regular, there must exist $t,u\in \Sigma^*$ such that $s = tu^\omega$.
Since $s \in S({\bf G}')$, there must exist a state $q_s \in \mathcal{B}_{{\bf G}'}=Q_m$, which
is visited by $s$ infinitely often, i.e. $q_s\in \inf(s) = \{q\in \mathcal{B}_{{\bf G}'}|\exists^\omega n \in \mathbb{N}:s(n)=q\}$,
where $\exists^\omega$ denotes the quantifier ``there exist infinitely many", $\mathbb{N}$
denotes the set of natural numbers, and $s(n)$ represents the state visited by $s$ at the $n$-th step.
Since $tu^\omega \in S({\bf G}')$, $t \in \text{pre}(S({\bf G}'))$ and thus there must exist
a state $q_t \in Q$ satisfying $\delta(q_0,t) = q_t$. We have already known that
$tu^\omega$ will visit state $q_s$ infinitely often. So there must exist
$t' \in \text{pre}(u^\omega) \subseteq \Sigma^*$ such that $\delta(q_t,t') = q_s \in Q_m$, i.e. $\delta(q_0,tt')\in Q_m$;
thus $tt' \in L_m({\bf G})$. Obviously, $t \leq tt'$. Hence $E'$ is markable.

Now, suppose $E'' \in \mathcal{M}(E_l)$, and $s \in E''$;
we must show that $s \in E'$. Write $s = tu^\omega$. Since $E''$ is markable, there exists
$s' \in \text{pre}(s)$ such that $s' \geq t$ and $s' \in L_m({\bf G})$. So there
exist states $q_t$ and $q_{s'}$ such that $\delta(q_0,t) = q_t$ and $\delta(q_0,s')=q_{s'} \in Q_m$.
Because $s = tu^\omega \in E_l \cap S({\bf G})$, state $q_t$ and its downstream states $q'$ visited by
strings $t' \in \text{pre}(u^\omega)$ will be visited by $s$ infinitely many times, i.e. $q_t,q' \in \inf(s)$.
Furthermore, since $s'\geq t$, there must exist a string $t' \in \Sigma^*$ such that $s' = tt'$;
thus $\delta(q_t,t')=q_{s'}$. Obviously, $t' \in \text{pre}(u^\omega)$; thus $q_{s'}\in \inf(s)$.
Hence, $q_{s'}\in \inf(s) \cap Q_m = \mathcal{B}_{{\bf G}'}$. Namely, $\inf(s)\cap \mathcal{B}_{{\bf G}'} \neq \emptyset$;
thus $s$ is accepted by ${\bf G}'$, i.e. $s \in S({\bf G}')$, which derives that $s \in E'$ (as required).
\end{proof}

For an arbitrary maximal legal specification $E_l$, there may exist infinite behavior with no marker states in ${\bf G}$ being visited. By the intersection of $S({\bf G}')$ and $E_l$ as given in Proposition \ref{pro:mark}, the resultant supremal markable sublanguage of $E_l$ will ensure that the obtained infinite strings will always visit some marker states in ${\bf G}$.

Let the example of the robot be revisited. To compute the supremal element $\text{sup}\mathcal{M}(E_l)$, we first construct DBA ${\bf G}'$ (shown in Fig.~\ref{fig:robot_G'}) by the approach given in Proposition \ref{pro:mark}. Then we compute the intersection of $S({\bf G}')\cap E_l$. The result is represented by the DBA shown in Fig.~\ref{fig:robot_mel}, where infinite string $s=c_1(u_1c_3)^\omega$ no longer exists as only state $1$ will be visited infinitely often.
\begin{figure}[ht]
\centering
    \includegraphics[scale = 0.5]{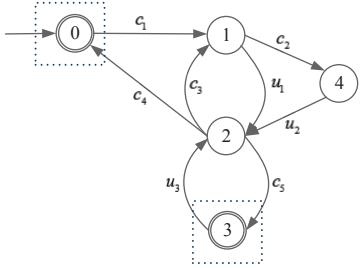}
\caption{State transition graph of DBA ${\bf G}'$}
\label{fig:robot_G'}
\end{figure}
\begin{figure}[ht]
\centering
    \includegraphics[scale = 0.5]{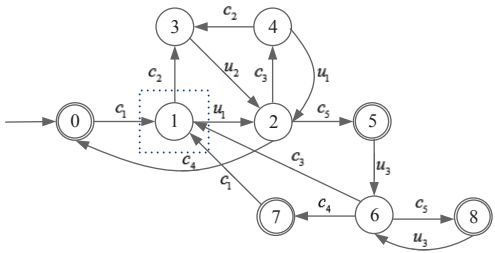}
\caption{State transition graph of the DBA representing supremal sublanguage $\sup\mathcal{M}(E_l)$}
\label{fig:robot_mel}
\end{figure}

\section{$\omega-$nonblocking supervisor synthesis}
For the NSCP$^\omega$, we have the following result:
\begin{thm} \label{pro:existcond}
For any plant DES $\bf G$, any $*-$language $K\subseteq L_m({\bf G})$ and $\omega-$language
$T\subseteq S({\bf G})$, there exists an $\omega-$nonblocking supervisor $f:L({\bf G})\rightarrow \Gamma$
that synthesizes $K$ and $T$ if and only if
\begin{enumerate}[(i)]
\item $K$ is $*-$controllable with respect to ${\bf G}$ and $*-$closed with respect to $L_m({\bf G})$;
\item $T$ is $\omega-$controllable with respect to ${\bf G}$ and $\omega-$closed with respect to $S({\bf G})$;
\item $T$ is markable with respect to $L_m({\bf G})$;
\item $\overline{K} = \text{pre}(T)$.

\end{enumerate}
\end{thm}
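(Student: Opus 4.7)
The plan is to split the argument into necessity and sufficiency, handled in turn.

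\emph{Necessity.} Assuming $f$ is $\omega$-nonblocking with $L_m({\bf G}^f)=K$ and $S({\bf G}^f)=T$, conditions (i) and (ii) reduce to known existence characterizations: (i) is the Ramadge--Wonham property for a nonblocking supervisor synthesizing $K$, and (ii) is the $\omega$-existence criterion recalled in Section~\ref{preliminaries} from \cite{ThiWon94b}, applied to the deadlock-free supervisor synthesizing $T$. Condition (iv) is the identity chain $\overline{K}=\overline{L_m({\bf G}^f)}=L({\bf G}^f)=\text{pre}(S({\bf G}^f))=\text{pre}(T)$, i.e.\ nonblockingness followed by deadlock-freeness. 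For (iii), given $s=tu^\omega\in T$, livelock-freeness hands me some $q\in R(s)\cap Q_m^f$; any visit to $q$ at a step past $|t|$ yields a prefix $s'\geq t$ with $s'\in L_m({\bf G}^f)\subseteq L_m({\bf G})$, as required.

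\emph{Sufficiency.} I would define $f(s):=\Sigma_u\cup\{\sigma\in\Sigma_c\mid s\sigma\in\overline{K}\}$ on $L({\bf G})$. Using (i), the standard synthesis argument gives $L({\bf G}^f)=\overline{K}$, and $*$-closure yields $L_m({\bf G}^f)=\overline{K}\cap L_m({\bf G})=K$, which is nonblocking. Using (iv) to rewrite $\overline{K}=\text{pre}(T)$ and the $\omega$-closure in (ii),
\begin{align*}
S({\bf G}^f)&=\text{lim}(L({\bf G}^f))\cap S({\bf G})\\
&=\text{lim}(\text{pre}(T))\cap S({\bf G})=\text{clo}(T)\cap S({\bf G})=T,
\end{align*}
from which deadlock-freeness follows by (iv).

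The main obstacle is livelock-freeness: markability guarantees only one marking prefix per decomposition, but $R(s)\cap Q_m^f\neq\emptyset$ demands a marker state visited infinitely often. The plan is an iteration. From $s=tu^\omega\in T$, markability supplies $s_1\in\text{pre}(s)\cap L_m({\bf G})$ with $s_1\geq t$. Writing $s_1=tu^k v$ with $v$ a (possibly empty) proper prefix of $u$ and setting $u=vv'$, one can re-express $s=s_1\tilde{u}^\omega$ with $\tilde{u}=v'v$, which is nonempty because $u\neq\varepsilon$. Applying markability to this new decomposition produces $s_2\geq s_1$ in $L_m({\bf G})$; iterating gives a strictly increasing sequence $s_1<s_2<\cdots$ of prefixes of $s$, each lying in $L_m({\bf G})\cap\overline{K}=L_m({\bf G}^f)$, using (iv) for the $\overline{K}$ membership. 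Since $Q_m^f$ is finite, pigeonholing on the sequence $\{\eta^f(q_0^f,s_i)\}_{i\geq 1}\subseteq Q_m^f$ forces some $q\in Q_m^f$ to equal $\eta^f(q_0^f,s_i)$ for infinitely many $i$, and this $q$ therefore lies in $R(s)\cap Q_m^f$. The only delicate bookkeeping is the rotation step: one must verify that $\tilde{u}=v'v$ is well-defined and nonempty at every stage so that markability can be invoked afresh.
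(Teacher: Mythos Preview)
Your proof is correct and follows the same overall bipartition as the paper, with conditions (i), (ii) and (iv) handled identically via the cited existence theorems and the nonblocking/deadlock-free identities. The substantive difference is how livelock-freeness is derived from markability in the sufficiency direction. The paper disposes of this in one sentence, leaning on the remark after Definition~\ref{defn:markable}: because markability quantifies over \emph{all} decompositions $s=tu^\omega$, one may choose the decomposition in which $t$ covers the transient of the run and $u$ traces an actual state-cycle of the DBA, so that any $s'\geq t$ in $L_m({\bf G})$ lands on that cycle and is revisited infinitely often. You instead keep the decomposition arbitrary, iterate markability over shifted bases, and pigeonhole on the finite set $Q_m^f$. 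Your route is more explicit and avoids singling out a privileged decomposition; the paper's is shorter but leaves the recurrence of the marker state implicit.

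One slip in your iteration: markability applied to $s=s_1\tilde{u}^\omega$ only yields $s_2\geq s_1$, not $s_2>s_1$, so the sequence is not strictly increasing as written. The fix is to take $s_1\tilde{u}$ (rather than $s_1$) as the new base, forcing $s_2\geq s_1\tilde{u}>s_1$. The nonemptiness of $\tilde{u}$ that you flagged is automatic from $|\tilde{u}|=|v'|+|v|=|u|>0$; it is this strict-inequality step that actually needs attention.
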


\begin{proof}
We have known in \cite{Wonham16a} that there exists a nonblocking supervisor $f^*:L({\bf G})\rightarrow \Gamma$
that synthesizes $K$ if and only if condition (i) is satisfied, and in \cite{Thistl91} that
there exists a complete and deadlock-free supervisor $f^\omega:L({\bf G})\rightarrow \Gamma$
if and only if condition (ii) is satisfied.

From the definition of markability, string $s' \in \text{pre}(s)$ visits a marker state in $\bf G$
and this marker state will be visited by $s$ infinitely often; thus ${\bf G}^f$ is livelock-free
if and only if $T = S({\bf G}^f)$ is markable (condition (iii)).

If condition (iv) is satisfied, we have $\overline{K} = \text{pre}(T) = L({\bf G}^{f^*})=L({\bf G}^{f^\omega})$.
Hence, if writing $f = f^*=f^\omega$, then $f$ is nonblocking, deadlock-free and livelock-free, i.e. $\omega-$nonblocking.

On the other side, if there exists an $\omega-$nonblocking supervisor $f:L({\bf G})\rightarrow \Gamma$
that synthesizes $K$ and $T$, then we have $\overline{K} = L({\bf G}^{f^*}) = \text{pre}(T)$, i.e. condition (iv) holds.
\end{proof}

Theorem \ref{pro:existcond} gives the sufficient and necessary conditions for the existence of an $\omega-$nonblocking supervisor. %However, some conditions may not always hold, especially that $\omega-$closure is not preserved under arbitrary unions. To tackle the issue that some conditions in Theorem \ref{pro:existcond} may not hold, ****
In Sections \ref{preliminaries} and \ref{markability}, we have shown that there exists the supremal $*-$closed and $*-$controllable sublanguage $\sup\mathcal{C}^*(E_s)$ of a given language $E_s$, the supremal $\omega-$controllable sublanguage $\sup\mathcal{C}^\omega(E_l)$ of $E_l$, and the supremal markable sublanguage $\sup\mathcal{M}(E_l)$ of $E_l$; however, there does not exist the supremal $\omega-$closed sublanguage of $E_l$ as $\omega-$closure is not preserved under arbitrary unions. Hence, there may not exist a supremal sublanguage that satisfies conditions (i)-(iv) in Theorem \ref{pro:existcond} simultaneously.

To tackle the issue that the $\omega-$closure part of condition (ii) in Theorem \ref{pro:existcond} may not hold, in the following process of supervisor synthesis, we first ignore the requirement that the resulting infinite sublanguage be $\omega-$closed with respect to $S({\bf G})$. Then we apply Algorithm 2 in \cite{Wei19} to obtain a maximal subautomaton representing an $\omega-$closed sublanguage. To this end, NSCP$^\omega$ can be solved by the following approach similar to the Thistle's rules.

First, we compute the supremal language pair $(K, T) \subseteq 2^{\Sigma^*\times\Sigma^\omega}$
satisfying conditions (i), (iii) and (iv) in Theorem \ref{pro:existcond} and $T$ being $\omega-$controllable.
\begin{algm}\label{algm:supcon} {\bf Input}: DES ${\bf G}$ with $L({\bf G})$, $L_m({\bf G})$ and $S({\bf G})$, $*-$language $E_s$, $\omega-$language $E_l$.\\
{\bf Output}: Supremal language pair $(K_N, T_N) \subseteq 2^{\Sigma^* \times \Sigma^\omega}$.
\begin{enumerate}[Step i)]

\item Set $K_0 = L_m({\bf G}) \cap E_s$ and $T_0 = \sup\mathcal{M}(S({\bf G}) \cap E_l)$.

\item For $i \geq 1$, apply the standard algorithm in \cite{Wonham16a} to compute the supremal $*-$controllable
and $*-$closed sublanguage $K_i = \sup\mathcal{C}^*(K_{i-1} \cap \text{pre}(T_{i-1}))$.

\item Apply the algorithm in \cite{Thistl91} to compute the supremal $\omega-$language
$T_i = \sup\mathcal{C}^\omega(\text{lim}(\overline{K_i}) \cap T_{i-1})$.
Check if $\overline {K_i} = \text{pre}({T_i})$: if YES, output $K_i$ and $T_i$ and denote $i$ as $N$; otherwise, advance $i$ to $i+1$ and go to Step ii).

\end{enumerate}
\end{algm}

\begin{lem}\label{teminate}
In Algorithm \ref{algm:supcon}, for some $N$, $K_i=K_N,T_i=T_N$ for all $i\geq N$, where $i,N\in\mathbb{N}$.
\end{lem}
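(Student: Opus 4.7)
The plan is to establish termination in two stages: first show that both iterates form monotone non-increasing chains under set inclusion, then invoke the finite-state character of the problem to conclude that such chains must stabilise.

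For monotonicity, I would proceed by induction on $i \geq 1$, showing $K_i \subseteq K_{i-1}$ and $T_i \subseteq T_{i-1}$. The base case uses only the fact that $\sup\mathcal{C}^*(X) \subseteq X$ and $\sup\mathcal{C}^\omega(X) \subseteq X$ by definition: $K_1 = \sup\mathcal{C}^*(K_0 \cap \text{pre}(T_0)) \subseteq K_0$, and then $T_1 = \sup\mathcal{C}^\omega(\text{lim}(\overline{K_1}) \cap T_0) \subseteq T_0$. The inductive step is identical in structure, using in addition only the monotonicity of $\text{pre}(\cdot)$, $\text{lim}(\cdot)$, and prefix closure.

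For termination, I would exploit the fact that $\bf G$, $E_s$ and $E_l$ are all finite-state objects, so $K_0 = L_m({\bf G}) \cap E_s$ is recognised by a finite automaton $\mathcal{A}_K$ and $T_0 = \sup\mathcal{M}(S({\bf G}) \cap E_l)$ is recognised by a finite DBA $\mathcal{A}_T$ (the latter by Proposition \ref{pro:mark}). The operators $\sup\mathcal{C}^*$, $\sup\mathcal{C}^\omega$, $\text{pre}(\cdot)$, $\text{lim}(\cdot)$, prefix closure, and intersection with a regular language all preserve regularity, and the standard implementations of $\sup\mathcal{C}^*$ and $\sup\mathcal{C}^\omega$ from \cite{Wonham16a,Thistl91} operate by iteratively deleting states or transitions from an ambient product automaton. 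Combined with the inclusions $K_i \subseteq K_0$ and $T_i \subseteq T_0$ from the first stage, every iterate is therefore represented inside $\mathcal{A}_K$ or $\mathcal{A}_T$ respectively. Since the lattice of sub-automata of a finite automaton is itself finite, the monotone non-increasing chains $\{K_i\}$ and $\{T_i\}$ can strictly decrease only finitely many times, so there exists $N$ with $(K_{N+1}, T_{N+1}) = (K_N, T_N)$; because the recursion is deterministic in its inputs, this equality then propagates to all $i \geq N$.

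The main obstacle is the second stage: justifying that every iterate of the composed operators remains representable inside a single finite ambient automaton. Although each constituent operation is known to preserve regularity, the simplest rigorous route is to remark that, by monotonicity, all $K_i$ (resp.\ $T_i$) are sublanguages of the finitely representable $K_0$ (resp.\ $T_0$), and that the restriction of $\mathcal{A}_K$ (resp.\ $\mathcal{A}_T$) to any regular sublanguage is again a sub-automaton of the same ambient structure. This reduces the whole iteration to a monotone chain in a finite lattice, where termination is automatic.
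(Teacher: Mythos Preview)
Your approach is essentially the paper's: show both sequences are descending chains, then appeal to finite-state representability to force stabilisation. The paper is terser --- the inclusion $K_i \subseteq K_{i-1}$ is immediate from $\sup\mathcal{C}^*(X)\subseteq X$ at each step, no induction needed --- and for termination it simply cites a standard result from \cite{Wonham16a} rather than spelling out the lattice argument.

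One genuine error in your final paragraph, though: the claim that ``the restriction of $\mathcal{A}_K$ to any regular sublanguage is again a sub-automaton of the same ambient structure'' is false. A regular sublanguage of $L(\mathcal{A}_K)$ need not be recognised by any sub-automaton of $\mathcal{A}_K$ --- take $\mathcal{A}_K$ to be the one-state automaton accepting $\Sigma^*$, which has only finitely many sub-automata but infinitely many regular sublanguages, and indeed infinite strictly descending chains of them. Monotonicity plus regularity of $K_0$ is \emph{not} enough to conclude the chain lives in a finite lattice. Your earlier remark --- that the standard implementations of $\sup\mathcal{C}^*$ and $\sup\mathcal{C}^\omega$ operate by deleting states or transitions from a fixed product automaton built from $\bf G$, $E_s$, $E_l$ --- is the correct justification, and you should commit to it rather than the ``simplest rigorous route'' you offer at the end, which does not work.
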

\begin{proof}
As
\begin{align*}
K_i &= \sup\mathcal{C}^*(K_{i-1} \cap \text{pre}(T_{i-1}))\\
& \subseteq K_{i-1} \cap \text{pre}(T_{i-1})\\
& \subseteq K_{i-1},
\end{align*}
$K_i$ is a descending chain \cite{Wonham16a}. Moreover, the automata representing $K_i$ are with finite states. Therefore, by Proposition 2 in [Section 2.8, \cite{Wonham16a}], for some $N$, $K_i=K_N$ for all $i\geq N$.

Similarly, as
\begin{align*}
T_i &= \sup\mathcal{C}^\omega(\text{lim}(\overline{K_i}) \cap T_{i-1})\\
& \subseteq \text{lim}(\overline{K_i}) \cap T_{i-1}\\
& \subseteq T_{i-1},
\end{align*}
$T_i$ is a descending chain. Moreover, the automata representing $T_i$ are with finite states. Therefore, by Proposition 2 in [Section 2.8, \cite{Wonham16a}] and $\overline {K_N} = \text{pre}({T_N})$, for some $N$, $T_i=T_N$ for all $i\geq N$.
\end{proof}
As Lemma \ref{teminate} holds, Algorithm \ref{algm:supcon} will terminate after finite iterations.

\begin{pro}\label{Ti_markable}
With $K_0,K_i,T_0,T_i$ as defined in Algorithm 1, $T_i$ is markable with respect to $L_m(\bf G)$.
\end{pro}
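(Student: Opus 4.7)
The plan is to prove the statement by straightforward induction on $i$, leveraging the hereditary property of markability that was already established in Lemma \ref{lem:mark_sub}.

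For the base case $i=0$, I would argue that $T_0 = \sup\mathcal{M}(S(\mathbf{G}) \cap E_l)$ is itself markable with respect to $L_m(\mathbf{G})$. This is immediate from Lemma \ref{lem:sup_markability}: the family $\mathcal{M}(S(\mathbf{G}) \cap E_l)$ is closed under arbitrary unions, so its supremal element, being the union of all its members, is itself an element of the family, i.e.\ markable.

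For the inductive step, assume $T_{i-1}$ is markable with respect to $L_m(\mathbf{G})$. From Step iii) of Algorithm \ref{algm:supcon},
\[ T_i \;=\; \sup\mathcal{C}^\omega(\text{lim}(\overline{K_i}) \cap T_{i-1}). \]
Since the supremal $\omega$-controllable sublanguage of any language is contained in that language (a standard property of the operator $\sup\mathcal{C}^\omega$), we obtain
\[ T_i \;\subseteq\; \text{lim}(\overline{K_i}) \cap T_{i-1} \;\subseteq\; T_{i-1}. \]
Applying Lemma \ref{lem:mark_sub} to the inclusion $T_i \subseteq T_{i-1}$ immediately yields that $T_i$ is markable with respect to $L_m(\mathbf{G})$, closing the induction.

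Because the argument reduces to invoking two already-proved lemmas, there is no real obstacle. The only subtlety worth flagging is the (standard but previously unstated) inclusion $\sup\mathcal{C}^\omega(X) \subseteq X$, which underlies the step $T_i \subseteq T_{i-1}$; this should be cited explicitly to keep the proof self-contained.
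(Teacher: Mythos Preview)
Your proposal is correct and mirrors the paper's own proof almost verbatim: induction on $i$, with the base case given by the definition of $T_0$ as a supremal markable sublanguage, and the inductive step obtained from the chain $T_i \subseteq \text{lim}(\overline{K_i}) \cap T_{i-1} \subseteq T_{i-1}$ together with Lemma~\ref{lem:mark_sub}. Your version is in fact slightly more explicit, since you justify the base case via Lemma~\ref{lem:sup_markability} and flag the inclusion $\sup\mathcal{C}^\omega(X)\subseteq X$, whereas the paper leaves both implicit.
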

\begin{proof}
The proof is by induction.\\
(Basis step): $T_0$ is markable with respect to $L_m(\bf G)$ by its definition.\\
(Inductive step): Assume that $T_{i-1}$ is markable with respect to $L_m(\bf G)$. As $T_i=\sup\mathcal{C}^\omega(\text{lim}(\overline{K_i}) \cap T_{i-1})\subseteq \text{lim}(\overline{K_i}) \cap T_{i-1}\subseteq T_{i-1}$ and $T_{i-1}$ is markable with respect to $L_m(\bf G)$ by the inductive assumption, by Lemma 1 we have that $T_i$ is markable with respect to $L_m(\bf G)$.
\end{proof}

It is easily verified that the output language pair $(K_N,T_N) \in 2^{\Sigma^*\times\Sigma^\omega}$ of Algorithm 1 satisfies that
\begin{align}
K_N &= L_m({\bf G}) \cap \overline{K_N}, \label{algm:output1} \\
K_N &\subseteq L_m({\bf G})\cap E_s, \label{algm:output2}\\
T_N &\subseteq S({\bf G})\cap E_l, \label{algm:output3}\\
\overline{K_N} &= \text{pre}(T_N). \label{algm:output4}
\end{align}
%Also, by Lemma 2, for each $i \geq 0$, $T_N$ is markable.
Let
\begin{align}\label{sup_E_s_E_l}
\nonumber\mathcal{C}^{*\omega}&(E_s,E_l)= \{(K,T)|\\
\nonumber&K\subseteq E_s ~\text{is $*-$controllable with respect to ${\bf G}$ and}\\
\nonumber&\text{$*-$closed with respect to $L_m({\bf G}$)}, \\
\nonumber&\&~T\subseteq  E_l ~\text{is $\omega-$controllable with respect to ${\bf G}$},\\
\nonumber&\&~T\subseteq  E_l ~\text{is markable with respect to $L_m({\bf G})$},\\
&\&~\overline{K} = \text{pre}(T)\}.
\end{align}
\begin{pro}\label{sup_star_omega0}
$\mathcal{C}^{*\omega}(E_s,E_l)$ is nonempty and is closed under arbitrary unions. Moreover, $\mathcal{C}^{*\omega}(E_s,E_l)$ contains a unique supremal element, denoted as $\sup\mathcal{C}^{*\omega}(E_s,E_l)$.
\end{pro}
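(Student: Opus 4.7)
The plan is to follow the same pattern used to establish Lemma \ref{lem:sup_markability}: exhibit a trivial element to show nonemptiness, verify closure under arbitrary unions componentwise, and then identify the supremum as the total union of the collection.

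First, I would check that $(\emptyset,\emptyset) \in \mathcal{C}^{*\omega}(E_s,E_l)$. The empty $*$-language is vacuously $*$-controllable and $*$-closed with respect to $L_m({\bf G})$; the empty $\omega$-language is vacuously $\omega$-controllable and, by Lemma \ref{lem:sup_markability}, markable with respect to $L_m({\bf G})$; and $\overline{\emptyset} = \emptyset = \text{pre}(\emptyset)$, so condition on the coupling is satisfied.

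Next I would verify closure under arbitrary unions. Given a family $\{(K_\alpha,T_\alpha)\}_{\alpha \in A} \subseteq \mathcal{C}^{*\omega}(E_s,E_l)$, set $K := \bigcup_\alpha K_\alpha$ and $T := \bigcup_\alpha T_\alpha$. The inclusions $K \subseteq E_s$ and $T \subseteq E_l$ are immediate. From the classical theory in \cite{Wonham16a}, the family of $*$-controllable and $*$-closed sublanguages of a fixed language is closed under arbitrary unions, so $K$ inherits both properties. As noted in the introduction, $\omega$-controllability is preserved under arbitrary unions, so $T$ is $\omega$-controllable; markability is preserved under unions by Lemma \ref{lem:sup_markability}. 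The coupling condition follows from the fact that both prefix closure and $\text{pre}$ distribute over arbitrary unions:
\[
\overline{K} = \overline{\bigcup_\alpha K_\alpha} = \bigcup_\alpha \overline{K_\alpha} = \bigcup_\alpha \text{pre}(T_\alpha) = \text{pre}\Bigl(\bigcup_\alpha T_\alpha\Bigr) = \text{pre}(T).
\]

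Finally, taking the union of all members of $\mathcal{C}^{*\omega}(E_s,E_l)$ (componentwise in $K$ and $T$) yields an element of the collection that dominates every other element, so it is the unique supremal pair $\sup\mathcal{C}^{*\omega}(E_s,E_l)$.

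The only step that is not purely mechanical is checking the coupling condition $\overline{K} = \text{pre}(T)$, since the $\mathcal{C}^{*\omega}$ set is defined as pairs rather than single languages; the key observation that both $\overline{(\cdot)}$ and $\text{pre}(\cdot)$ commute with arbitrary unions is what makes this coupling stable under unions. Once this is made explicit, the remaining ingredients are direct consequences of results already established earlier in the paper and in \cite{Wonham16a,Thistl91}.
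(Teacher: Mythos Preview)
Your proposal is correct and follows essentially the same approach as the paper: exhibit the empty pair for nonemptiness, verify closure under arbitrary unions componentwise by citing the standard preservation results for $*$-controllability, $*$-closedness, $\omega$-controllability, and markability, and then take the total union as the supremal element. In fact, your version is more complete than the paper's: the paper's proof does not explicitly check $*$-closedness of $K$ nor the coupling condition $\overline{K}=\text{pre}(T)$, whereas you spell out the latter via the distributivity of $\overline{(\cdot)}$ and $\text{pre}(\cdot)$ over unions, which is exactly the point needed to make the pairwise definition stable.
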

\begin{proof}
Since the empty language is $*-$controllable, $*-$closed, $\omega-$controllable and markable, it is a member of $\mathcal{C}^{*\omega}(E_s,E_l)$.

Let $(K_\alpha,T_\alpha)\in \mathcal{C}^{*\omega}(E_s,E_l)$ for all $\alpha$ in some index set $A$, and let $(K,T)=\cup\{(K_\alpha,T_\alpha)|\alpha\in A\}$. Then, $(K,T)\subseteq(E_s,E_l)$. Furthermore, $K$ is $*-$controllable with respect to ${\bf G}$ by Proposition 1 in [Section 3.5, \cite{Wonham16a}], $T$ is $\omega-$controllable with respect to ${\bf G}$ by Proposition 5.8 in \cite{Thistl91}, and $T$ is markable with respect to $L_m({\bf G})$ by Lemma \ref{lem:sup_markability} and Proposition \ref{Ti_markable}. Thus, we have for the supremal element
\begin{align*}
\sup\mathcal{C}^{*\omega}(E_s,E_l)=\cup\{(K,T)|(K,T)\in\mathcal{C}^{*\omega}(E_s,E_l)\}.
\end{align*}
\end{proof}
\begin{pro}\label{sup_star_omega}
With $\mathcal{C}^{*\omega}(E_s, E_l)$ as defined in (\ref{sup_E_s_E_l}), $K_0,K_i,T_0,T_i$ as defined in Algorithm 1 and $K_N,T_N$ being its output languages, we have \[(K_N,T_N) = \sup\mathcal{C}^{*\omega}(E_s, E_l).\]%$K_N$ is the supremal $*-$controllable and $*-$closed sublanguage, and  $T_N$ is the supremal $\omega-$controllable and markable sublanguage.
\end{pro}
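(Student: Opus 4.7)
My plan is to prove the equality in two halves: (a) membership $(K_N,T_N) \in \mathcal{C}^{*\omega}(E_s, E_l)$, and (b) supremality, i.e. $K \subseteq K_N$ and $T \subseteq T_N$ for every $(K,T) \in \mathcal{C}^{*\omega}(E_s, E_l)$.

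For (a), I will simply read off the four defining conditions from Algorithm \ref{algm:supcon}. The inclusions $K_N \subseteq E_s$ and $T_N \subseteq E_l$ follow from a trivial monotone descent through Steps i)--iii) starting from $K_0 \subseteq E_s$ and $T_0 \subseteq E_l$; $*$-controllability and $*$-closedness of $K_N$, as well as $\omega$-controllability of $T_N$, are built into the $\sup\mathcal{C}^*(\cdot)$ and $\sup\mathcal{C}^\omega(\cdot)$ operators applied in Steps ii) and iii); markability of $T_N$ is Proposition \ref{Ti_markable}; and $\overline{K_N} = \text{pre}(T_N)$ is precisely the termination test, which is reached in finitely many iterations by Lemma \ref{teminate}.

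For (b), I will proceed by induction on $i$, proving that every $(K,T) \in \mathcal{C}^{*\omega}(E_s, E_l)$ satisfies $K \subseteq K_i$ and $T \subseteq T_i$ for all $i \geq 0$. The base case is immediate: $*$-closedness of $K$ with respect to $L_m({\bf G})$ together with $K \subseteq E_s$ gives $K \subseteq L_m({\bf G}) \cap E_s = K_0$, while markability of $T$ combined with $T \subseteq E_l$ places $T$ in the family $\mathcal{M}(E_l)$ (markability forces $T \subseteq S({\bf G})$ by Definition \ref{defn:markable}), so $T \subseteq \sup\mathcal{M}(S({\bf G}) \cap E_l) = T_0$. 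For the inductive step, the identity $\overline{K} = \text{pre}(T)$ from condition (iv) is the crucial linking device: from $T \subseteq T_{i-1}$ and $K \subseteq \overline{K} = \text{pre}(T) \subseteq \text{pre}(T_{i-1})$ I obtain $K \subseteq K_{i-1} \cap \text{pre}(T_{i-1})$, and $*$-controllability and $*$-closedness of $K$ then upgrade this to $K \subseteq \sup\mathcal{C}^*(K_{i-1} \cap \text{pre}(T_{i-1})) = K_i$. For $T$, I use $T \subseteq \text{lim}(\text{pre}(T)) = \text{lim}(\overline{K}) \subseteq \text{lim}(\overline{K_i})$ (the last inclusion coming from $K \subseteq K_i$ just proved) together with $T \subseteq T_{i-1}$ to conclude $T \subseteq \text{lim}(\overline{K_i}) \cap T_{i-1}$; $\omega$-controllability of $T$ then yields $T \subseteq \sup\mathcal{C}^\omega(\text{lim}(\overline{K_i}) \cap T_{i-1}) = T_i$. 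Specializing to $i = N$ completes the argument.

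The main obstacle I anticipate is keeping the two components $K$ and $T$ synchronized through the induction: the identity $\overline{K} = \text{pre}(T)$ must be exploited twice per step in a specific order, first to push $K$ up to $K_i$ via $\text{pre}(T_{i-1})$ and then to push $T$ up to $T_i$ via $\text{lim}(\overline{K_i})$ using the freshly-obtained inclusion $K \subseteq K_i$. Once this ordering is fixed, the remainder reduces to invoking the standard supremality properties of $\sup\mathcal{C}^*$ and $\sup\mathcal{C}^\omega$ and the union-closure arguments already deployed in the proof of Proposition \ref{sup_star_omega0}.
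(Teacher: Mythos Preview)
Your proof is correct and shares the paper's overall two-part decomposition (membership, then supremality). The difference is confined to part (b): the paper argues the supremality half by contradiction, taking the supremal pair $(K',T')$, assuming some $t \in \overline{K'}\setminus\overline{K_N}$ (respectively $s \in T'\setminus T_N$), locating the \emph{first} index $j$ at which the string drops out of $\overline{K_j}$ (respectively $T_j$), and then splitting into cases according to whether the removal was caused by the $\text{pre}(T_{j-1})$ intersection or by the $\sup\mathcal{C}^*$ operator. Your forward induction is logically equivalent but tidier: it treats an arbitrary pair $(K,T)$ rather than only the supremal one, advances the two inclusions in tandem in the correct order, and replaces the case analysis by a direct appeal to the supremality of $\sup\mathcal{C}^*$ and $\sup\mathcal{C}^\omega$. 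One small quibble: your sentence ``markability forces $T \subseteq S({\bf G})$ by Definition~\ref{defn:markable}'' is not quite right, since Definition~\ref{defn:markable} \emph{presupposes} $T \subseteq S({\bf G})$ rather than implying it; the inclusion you need actually comes from the standing hypothesis $E_l \subseteq S({\bf G})$ in the problem formulation, so the argument is unaffected.
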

\begin{proof}
$(\subseteq)$ As $K_N$ is a $*-$controllable and $*-$closed sublanguage of $E_s$, $T_N$ is an $\omega-$controllable sublanguage of $E_l$, $T_N$ is markable, and $\overline {K_N} = \text{pre}({T_N})$, this direction is automatic by the definition of $\sup\mathcal{C}^{*\omega}(E_s,E_l)$.

$(\supseteq)$ Let $(K',T'):=\sup\mathcal{C}^{*\omega}(E_s,E_l)$. We need to prove that $K'\subseteq K_N,T'\subseteq T_N$.\\
First we show that $\overline{K'}\subseteq \overline{K_N}$ by contradiction.\\
%(Basis step) We have $\varepsilon\in \overline{K'}$ and $\varepsilon\in \overline{K_N}$.\\
%(Inductive step) Assume that $t\in \overline{K'}=\text{pre}(T'),t\in \overline{K_N}=\text{pre}(T_N)$, and $t\sigma\in \overline{K'}$ for some $\sigma\in\Sigma$. We need to show that $t\sigma\in \overline{K_N}$.
Assume that $t\in\overline{K'},t\notin\overline{K_N}$. As $t\in \overline{K'}\subseteq \overline{K_0}$, there exists some $j\in\{1,\cdots,N\}$ such that $t\in \overline{K_{j-1}}$ but $t\notin \overline{K_j}= \overline{\sup\mathcal{C}^*(K_{j-1} \cap \text{pre}(T_{j-1}))}$, which includes two cases:
\begin{enumerate}[(i)]
  \item $t\notin\text{pre}(T_{j-1})$. We have $t\in \text{pre}(T_0)$ as $t\in \overline{K'}=\text{pre}(T')\subseteq\text{pre}(T_0)$. With $t\in \overline{K_{j-1}}$ we have $t\in\text{pre}(T_{j-2})$ by the definition of $K_{j-1}$. The reason for $t\in\text{pre}(T_{j-2})$ but $t\notin\text{pre}(T_{j-1})=\text{pre}(\text{sup}\mathcal{C}^\omega(\text{lim}(\overline{K_{j-1}})\cap T_{j-2}))$ is that there exists string $t'\sigma\leq t,t'\in\Sigma^*,\sigma\in\Sigma$ such that the occurrence of event $\sigma$ will violate the $\omega-$controllability. However, with $t\in\text{pre}(T')$ the occurrence of event $\sigma$ will not violate the $\omega-$controllability as $T'$ is an $\omega-$controllable sublanguage, which is a contradiction.
  \item $t\in \overline{K_{j-1}\cap\text{pre}(T_{j-1})}$ but $t\notin \overline{K_j}$. Then there exists string $t'\sigma\leq t,t'\in\Sigma^*,\sigma\in\Sigma$ such that the occurrence of event $\sigma$ will violate the $*-$controllability. However, with $t\in \overline{K'}$ we have that the occurrence of event $\sigma$ will not violate the $*-$controllability as  $K'$ is a $*-$controllable sublanguage, which is a contradiction.
\end{enumerate}
As the above two cases do not hold, we have $t\in \overline{K_N}$. Thus, $\overline{K'}\subseteq \overline{K_N}$.\\
As $K',K_N$ are $*-$closed with respect to $L_m({\bf G})$, we have $K'=\overline{K'}\cap L_m({\bf G}),K_N=\overline{K_N}\cap L_m({\bf G})$. As $\overline{K'}\subseteq \overline{K_N}$, $K'\subseteq K_N$ follows directly.\\%Let $t\in K'$. Then, $t\in\overline{K'}\subseteq\overline{K_N}$ and $t\in L_m({\bf G})$. Thus, $t\in\overline{K_N}\cap L_m({\bf G})=K_N$ and $K'\subseteq K_N$.\\
By $\overline{K'}=\text{pre}(T')$, we have $T'\subseteq\text{pre}^{-1}(\overline{K'})\cap\Sigma^\omega=\text{lim}(\overline{K'})$. As $\overline{K'}\subseteq\overline{K_N}$ and $K_i,i\in\{0,\cdots,N\}$ is a descending chain, we have $T'\subseteq\text{lim}(\overline{K'})\subseteq \text{lim}(\overline{K_i})$.\\
Then we show that $T'\subseteq T_N$ by contradiction.\\
Assume that $s\in T',s\notin T_N$. We have $s\in T_0=\sup\mathcal{M}(S({\bf G})\cap E_l)$; otherwise $s\notin T'$ as $T'$ is a markable sublanguage. Then there exists some $j\in\{1,\cdots,N\}$ such that $s\in T_{j-1}$ but $s\notin T_j$. As $T_j=\sup\mathcal{C}^\omega(\text{lim}(\overline{K_j})\cap T_{j-1})$, $s\in \text{lim}(\overline{K_j})\cap T_{j-1}$ but $s\notin T_j$ implies that there exists string $t\sigma\leq s,t\in\Sigma^*,\sigma\in\Sigma$ such that the occurrence of event $\sigma$ will violate the $\omega-$controllability. However, with $t\sigma\leq s\in T'$ and $T'$ is an $\omega-$controllable sublanguage, the occurrence of event $\sigma$ will not violate the $\omega-$controllability, which is a contradiction. Thus, we have $T'\subseteq T_N$.
\end{proof}

With $(K_N, T_N)$, we have the following result.

\begin{thm}  \label{thm:mark}
$NSCP^\omega$ is solvable if and only if the language pair $K_N$ and $T_N$ returned
by Algorithm~\ref{algm:supcon} are both nonempty, and \[\inf\mathcal{F}^\omega(A_l) \subseteq T_N.\]
\end{thm}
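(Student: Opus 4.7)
The plan is to prove both directions by combining the existence criterion of Theorem \ref{pro:existcond}, the supremality of $(K_N,T_N)$ established in Proposition \ref{sup_star_omega}, and the Thistle--Wonham existence theorem quoted in Section \ref{preliminaries}.

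For necessity, I would take any $\omega-$nonblocking supervisor $f$ solving NSCP$^\omega$ and form the pair $K := L_m(\mathbf{G}^f)$, $T := S(\mathbf{G}^f)$. Applying Theorem \ref{pro:existcond} in the reverse direction, this pair satisfies conditions (i)--(iv); combined with $K \subseteq E_s$ and $A_l \subseteq T \subseteq E_l$ from the problem statement, $(K,T) \in \mathcal{C}^{*\omega}(E_s,E_l)$, so Proposition \ref{sup_star_omega} gives $K \subseteq K_N$ and $T \subseteq T_N$. Nonemptiness of $T_N$ follows from $A_l \ne \emptyset$ forcing $T \ne \emptyset$, and then $\overline{K} = \text{pre}(T) \ne \emptyset$ together with nonblockingness yields $K \ne \emptyset$. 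Finally, since $T$ is $\omega-$closed and contains $A_l$, it is an $\omega-$closed superlanguage of $A_l$, whence $\inf\mathcal{F}^\omega(A_l) \subseteq T \subseteq T_N$ by the definition of the infimum.

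For sufficiency, assume the nonemptiness and inclusion hypotheses. Because Algorithm \ref{algm:supcon} guarantees $T_N$ is $\omega-$controllable, $\sup\mathcal{C}^\omega(T_N) = T_N$, so the Thistle--Wonham existence theorem, applied with $T_N$ in place of the legal specification and the given $A_l$ as the minimal acceptable one, produces an $\omega-$controllable and $\omega-$closed language $T$ with $A_l \subseteq T \subseteq T_N$. By Lemma \ref{lem:mark_sub} this $T$ inherits markability from $T_N$ (which is markable by Proposition \ref{Ti_markable}). I then set $K := \text{pre}(T) \cap L_m(\mathbf{G})$. Lemma \ref{lem:mark_nonblock} applied to the markable $T$ gives $\overline{K} = \text{pre}(T)$, which is condition (iv), and immediately yields $K = \overline{K} \cap L_m(\mathbf{G})$, so $K$ is $*-$closed with respect to $L_m(\mathbf{G})$. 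For $*-$controllability, $\overline{K}\Sigma_u \cap L(\mathbf{G}) = \text{pre}(T)\Sigma_u \cap L(\mathbf{G}) \subseteq \text{pre}(T) = \overline{K}$ by $\omega-$controllability of $T$. Combining $T \subseteq T_N$ with $\overline{K_N} = \text{pre}(T_N)$ and the $*-$closedness of $K_N$ yields $K \subseteq K_N \subseteq E_s$. All four conditions of Theorem \ref{pro:existcond} then hold, so an $\omega-$nonblocking supervisor $f$ with $L_m(\mathbf{G}^f) = K$ and $S(\mathbf{G}^f) = T$ exists and satisfies \eqref{eq:NSCP_cond1}--\eqref{eq:NSCP_cond2}.

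The main obstacle is the sufficiency construction. Because the family of $\omega-$closed sublanguages is not closed under unions, Algorithm \ref{algm:supcon} drops $\omega-$closure from its invariants, so $T_N$ itself may not be $\omega-$closed and cannot be used directly as the closed-loop infinite behavior. The hypothesis $\inf\mathcal{F}^\omega(A_l) \subseteq T_N$ is precisely what licenses the Thistle--Wonham step that recovers an $\omega-$closed $T$ inside $T_N$; once such $T$ is in hand, the markability of $T_N$ propagated via Lemma \ref{lem:mark_sub}, together with Lemma \ref{lem:mark_nonblock}, supplies a companion $*-$controllable and $*-$closed $K$ with $\overline{K} = \text{pre}(T)$, and the remaining verifications are routine.
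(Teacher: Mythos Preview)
Your proposal is correct and in fact more complete than the paper's proof, which establishes only sufficiency. The paper invokes Thistle's explicit supervisor construction (Theorem 5.9 in \cite{Thistl91}) to produce $f$ directly, then checks livelock-freeness via Lemma~\ref{lem:mark_sub}, nonblockingness via Lemmas~\ref{lem:mark_sub} and~\ref{lem:mark_nonblock}, and $L_m({\bf G}^f)\subseteq E_s$ via (\ref{algm:output1}) and (\ref{algm:output4}); necessity is not argued at all. Your sufficiency argument takes a cleaner, more abstract route: you use the Thistle--Wonham existence criterion to locate an $\omega$-controllable, $\omega$-closed $T$ inside $T_N$, build $K$ from $T$ and $L_m({\bf G})$, and then invoke Theorem~\ref{pro:existcond}. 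This avoids the explicit branch-by-branch description of the supervisor and makes transparent why each of conditions (i)--(iv) holds. Your necessity argument, routed through Proposition~\ref{sup_star_omega}, is a genuine addition.

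One small point: your nonemptiness step appeals to $A_l\neq\emptyset$, which is not assumed in the problem formulation. You can avoid this by arguing directly from $\omega$-nonblockingness of $f$: since $\varepsilon\in L({\bf G}^f)=\text{pre}(S({\bf G}^f))$ (deadlock-freeness) you get $T=S({\bf G}^f)\neq\emptyset$, and since $\varepsilon\in L({\bf G}^f)=\overline{L_m({\bf G}^f)}$ (nonblockingness) you get $K=L_m({\bf G}^f)\neq\emptyset$; then $K\subseteq K_N$ and $T\subseteq T_N$ finish the job.
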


\begin{proof}
First, by \cite[Theorem 5.9]{Thistl91}, since $T_N \neq \emptyset$ and $\inf\mathcal{F}^\omega(A_l) \subseteq T_N$,
there exists a complete and deadlock-free supervisor $f:L({\bf G})\rightarrow \Gamma$ (the detailed
rules are referred to \cite{Thistl91}) that solves SCP$^\omega$. Namely, the behaviors of the controlled
plant ${\bf G}^f$  are:
\begin{align*}
(a) &L({\bf G}^f) = \text{pre}(A') \cup \bigcup_{m\in M}m\text{pre}(E_m'),\\
(b) &S({\bf G}^f) = A' \cup \bigcup_{m\in M}mE_m',
\end{align*}
where $A' = \inf\mathcal{F}^\omega(A_l)$, $M$ is the set of all elements of $\text{pre}(T_N)\setminus \text{pre}(A')$ of minimal length
and $E_m'$ is a sublanguage of $T_N/m$ synthesized by a branch of the supervisor $f$.
Also, the behaviors satisfy the conditions
\begin{align}
&A' \subseteq S({\bf G}^f) \subseteq T_N, \label{eq:inf_cond1}\\
&\text{pre}(S({\bf G}^f)) = L({\bf G}^f). \label{eq:inf_cond2}
\end{align}

As $S({\bf G}^f) \subseteq T_N$ and $T_N$ is markable with respect to $L_m({\bf G})$, by Lemma \ref{lem:mark_sub} we have that $S({\bf G}^f)$ is markable with respect to $L_m({\bf G})$, i.e. ${\bf G}^f$ is livelock-free.

Under the control of $f$, the finite marker behavior is $L_m({\bf G}^f) = L_m({\bf G})\cap L({\bf G}^f)$.
It is left to prove that (i) $\overline{L_m({\bf G}^f)} = L({\bf G}^f)$ and (ii) $L_m({\bf G}^f) \subseteq E_s$.
For (i),
\begin{align*}
\overline{L_m({\bf G}^f)} & = \overline{L_m({\bf G})\cap L({\bf G}^f)} \\
                          & = \overline{L_m({\bf G})\cap \text{pre}(S({\bf G}^f))} ~~(\text{by (\ref{eq:inf_cond2})})\\
                          & = \text{pre}(S({\bf G}^f)) ~~(\text{by Lemmas \ref{lem:mark_sub} and \ref{lem:mark_nonblock}}) \\
                          & = L({\bf G}^f).
\end{align*}
For (ii),
\begin{align*}
L_m({\bf G}^f) & = L_m({\bf G})\cap L({\bf G}^f)\\
               & = L_m({\bf G}) \cap \text{pre}(S({\bf G}^f)) ~~(\text{by (\ref{eq:inf_cond2})})\\
               & \subseteq L_m({\bf G})\cap \text{pre}(T_N)  ~~(\text{by (\ref{eq:inf_cond1})})\\
               & = L_m({\bf G})\cap \overline{K_N} ~~(\text{by (\ref{algm:output4})})\\
               & = K_N ~~(\text{by (\ref{algm:output1})}) \\
               & \subseteq E_s.
\end{align*}
In conclusion, the constructed supervisor $f:L({\bf G})\rightarrow \Gamma$ is complete, $\omega-$nonblocking (nonblocking, deadlock-free and livelock-free), and the controlled behaviors satisfy conditions (\ref{eq:NSCP_cond1}) and (\ref{eq:NSCP_cond2}),
i.e. NSCP$^\omega$ is solvable.
\end{proof}

In the example of the robot, Algorithm 1 terminates when $N=2$. The resultant supervisor is shown in Fig.~\ref{fig:robot_sup}. Event $c_2$ is disabled at states $1$ and $3$ to meet the safety specification. For the maximal legal specification, state $2$ is visited infinitely often, which ensures that state $1$ in Fig.~\ref{fig:robot_el} is visited infinitely often. Moreover, the supervisor is livelock-free as any infinite string will visit some marker states.

\begin{figure}[ht]
\centering
    \includegraphics[scale = 0.5]{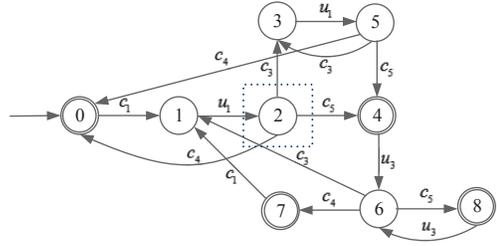}
\caption{State transition graph of the supervisor computed by Algorithm 1}
\label{fig:robot_sup}
\end{figure}

Let DBA ${\bf T}_N$ represent the $\omega-$language $T_N$ returned by Algorithm 1. As $\omega-$closure is not preserved under arbitrary unions, we apply Algorithm 2\footnotemark[1] in \cite{Wei19} to ${\bf T}_N$ to obtain its maximal subautomaton representing an $\omega-$closed sublanguage, which is an $\omega-$nonblocking and $\omega-$closed supervisor.
\footnotetext[1]{The input of Algorithm 2 in \cite{Wei19} is an arbitrary DBA; the output of it is a DBA with all \emph{bad cycles} (i.e. cycles containing no states in the B\"uchi acceptance criterion) being deleted. The intuition of this algorithm is to first detect all loops, and then delete all bad cycles by deleting their \emph{nearest controllable back edges} and removing the uncoreachable transitions iteratively.}
In the example of the robot, we apply Algorithm 2 in \cite{Wei19} to the supervisor shown in Fig.~\ref{fig:robot_sup}. Then we obtain the final supervisor shown in Fig.~\ref{fig:robot_sup_final}, in which event $c_3$ is disabled at state $5$ and event $c_5$ is disabled at state $6$ to prevent bad cycles $(3,5)^\circlearrowleft$ and $(6,8)^\circlearrowleft$ respectively.
\begin{figure}[ht]
\centering
    \includegraphics[scale = 0.5]{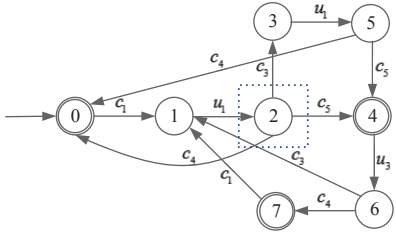}
\caption{State transition graph of the final supervisor}
\label{fig:robot_sup_final}
\end{figure}

\begin{remark}\label{rem_seow}
We note that in \cite{Seow20} a marker-progressive supervisory control approach (the Seow's approach) is proposed with the linear-time temporal logic control of a class of fair DES, in which every marker condition is true infinitely often. The Seow's approach sets some uncontrollable events as fair events and the marker progress under supervised temporal safety is achieved by the DES event fairness. Namely, the event-selection mechanism of the supervisor is directed by a fair event subset to drive the DES to visit every marker condition infinitely often.

A schematic about problem formulations and solutions of the Thistle's approach, the Seow's approach and ours is illustrated in Fig.~\ref{fig:relations}. In all, the Seow's approach and the Thistle's approach are in different language settings, and our approach follows the Thistle and Wonham's supervisory control framework.

In the Thistle's approach, the safety and liveness specifications are described by $\omega-$languages (often represented by Rabin automata). The infinite behavior of the supervisor computed by the Thistle's approach could not guarantee that the marker states will be visited infinite often.

\begin{figure*}[ht]
\centering
    \includegraphics[scale = 0.8]{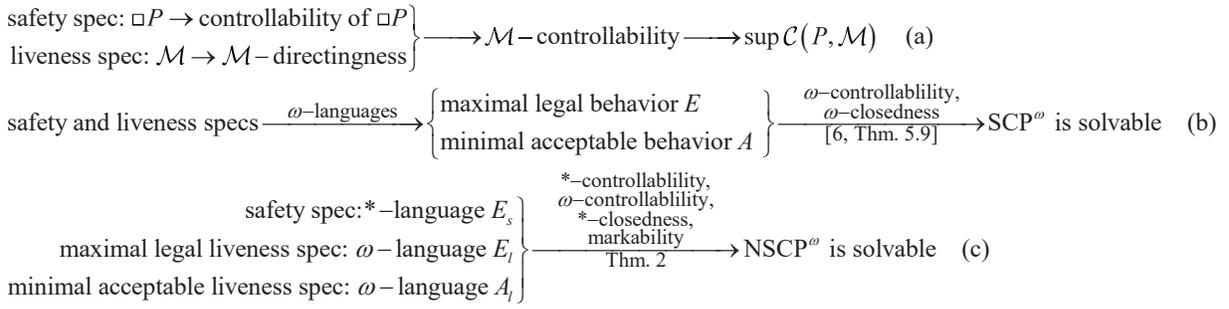}
\caption{Schematic about problem formulations and solutions of (a) the Seow's approach, (b) the Thistle's approach and (c) our approach}
\label{fig:relations}
\end{figure*}

In the Seow's approach, the safety specification and liveness specification are depicted by the invariance of past formula $\Box P$ and the system marker set $\mathcal{M}$ respectively. Moreover, $\mathcal{C}(P,\mathcal{M})$ contains a unique supremal element if $\Box P$ is $\mathcal{M}-$controllable; namely, $\Box P$ is controllable and $\mathcal{M}-$directing. Analogously, the definitions of controllability and $\mathcal{M}-$directingness of $\Box P$ in the Seow's approach are similar to the definitions of $\omega-$controllability and $\omega-$closedness in the Thistle's approach respectively. Under the condition of $\mathcal{M}-$directingness, if we treat the set of marker states as the B\"uchi acceptance criterion, the legal state trajectories of DES model $G$ satisfying $\Box P$ are $\omega-$closed. Thus, in this scenario, $\mathcal{M}-$directingness is a stronger condition than $\omega-$closedness. In the example of the robot, for the existence of cycle $(2,3)^\circlearrowleft$, it is not $\omega-$closed. Even without this cycle, it is still not $\mathcal{M}-$directing as cycle $(1,2)^\circlearrowleft$ containing no marker states.

To implement the Seow's approach, the set of fair events has to be selected according to the system marker set. Moreover, to satisfy $\mathcal{M}-$directingness, the plant is restricted to be $\omega-$closed.

We observe that the Seow's approach is not applicable to the example of the robot for two reasons.
\begin{enumerate}
  \item The set of fair events $\Sigma_\mathcal{F}=\Sigma_\mathcal{C}\cup\Sigma_\mathcal{J}$ with the specification pair $(P,\mathcal{M})$ needs to be defined for the event-selection mechanism, where $P\equiv\overline{p_{3}},\mathcal{M}=\{p_0,p_4\}$. However, in this example, events $c_4$ and $c_5$ are not potential fair events as they are not uncontrollable.
  \item As legal state trajectories (satisfying $\Box P$), say $c_1(u_1c_3)^\omega$, fail to meet marker conditions in $\mathcal{M}$ infinitely often, $\Box P$ is not $\mathcal{M}-$directing.
\end{enumerate}

To apply the Seow's approach, we may modify controllable events $c_4$ and $c_5$ to uncontrollable events (say $u_4$ and $u_5$ respectively). By setting $\Sigma_\mathcal{C}=\{u_4,u_5\}$ and employing the event-selection mechanism based on the fair event subset, we obtain the resultant supervisor shown in Fig.~\ref{fig:seow_sup}.
\end{remark}

\begin{figure}[ht]
\centering
    \includegraphics[scale = 0.5]{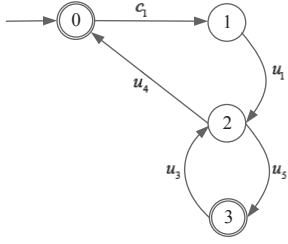}
\caption{State transition graph of the the supervisor returned by the Seow's approach}
\label{fig:seow_sup}
\end{figure}

While in our approach, we employ $*-$language and $\omega-$language to represent safety specification and liveness specification respectively. Generally, with Theorem 2 we have a sufficient and necessary condition to compute a supremal $\omega-$controllable and markable sublanguage, while Proposition 7 in \cite{Seow20} gives a sufficient condition for the existence of a supremal element. In addition, to obtain a supervisor guaranteeing that infinite strings of the controlled DES with infinite behavior will always some marker states infinitely often, by our approach no additional restrictions on events are necessary and the plant is not required to be $\omega-$closed. As both two approaches tackle the properties relevant to marker states, the result of the Seow's approach is within the scope of ours.

\begin{remark}
We also note that a compositional supervisory control approach is proposed in \cite{Ciolek20}, in which the supervisory control problem is translated into the reactive synthesis and planning frameworks. Applying the reactive synthesis approach to the example of the robot, we have a path of a CTL formula $\varphi_\mathcal{M}$ generated in the translation depicted in \cite{Ciolek20} as $0\xrightarrow{\textsf{c}=c_1}1\xrightarrow{\textsf{u}=u_1}2\xrightarrow{\textsf{c}=c_4}0$, where symbol $\textsf{c}$ encodes the system's choice of an event and symbol $\textsf{u}$ encodes the environment's choice of an uncontrollable event or no choice. This path guarantees that a marker state is reached from initial state $0$ (i.e. nonblocking). As the supervisor for $\mathcal{M}$ (a director) enables at most one controllable event at every state, the result in \cite{Ciolek20} is usually not supremal. In contrast, Algorithm 1 in our approach returns a more permissive $\omega-$controllable sublanguage by following the Thistle and Wonham's supervisory control framework.
\end{remark}

\begin{remark}
Reference \cite{Schmuck20} presents how to solve a synthesis problem from supervisory control theory by reactive synthesis or vice-versa. Here the synthesis problem is the same as the problem formulation of the supervisory control theory proposed by Ramadge \cite{Wonham16a} and Thistle and Wonham \cite{ThiWon94b}. In the example of the robot, by Theorem 2 in \cite{Schmuck20}, the reactive synthesis algorithm returns an equivalent solution with the Thistle's approach.
\end{remark}

\section{Conclusions}
In this paper, we propose an approach to synthesize an $\omega-$nonblocking supervisor in the framework initiated by Thistle and Wonham. The resultant supervisor ensures that any infinite string of the controlled DES with infinite behavior will visit some marker states. In the future work, we will extend our method to the decentralized and distributed supervisory control of DES with infinite behavior.

\end{document}